\newtheorem{theorem}{Theorem}[section]
\newtheorem{lemma}{Lemma}[section]
\newtheorem{corollary}{Corollary}[section]
\theoremstyle{Definition}
\newtheorem{definition}{Definition}[section]
\theoremstyle{remark}
\newtheorem{remark}[theorem]{Remark}
\numberwithin{equation}{section}
\begin{document}

\begin{flushleft}
 {\bf\Large {Donoho-Stark's and Hardy's Uncertainty Principles for the  Short-time Quaternion Offset Linear Canonical Transform}}

\parindent=0mm \vspace{.2in}

{\bf{M. Younus Bhat$^{1},$ and Aamir H. Dar$^{2}$ }}
\end{flushleft}

{{\it $^{1}$ Department of  Mathematical Sciences,  Islamic University of Science and Technology Awantipora, Pulwama, Jammu and Kashmir 192122, India.E-mail: $\text{g gyounusg@gmail.com}$}}

{{\it $^{2}$ Department of  Mathematical Sciences,  Islamic University of Science and Technology Awantipora, Pulwama, Jammu and Kashmir 192122, India.E-mail: $\text{ahdkul740@gmail.com}$}}

\begin{quotation}
\noindent
{\footnotesize {\sc Abstract.} The quaternion offset linear canonical transform (QOLCT) which is time-shifted and frequency-modulated version of the quaternion linear canonical
transform (QLCT) provides a
more general framework of most existing  signal
processing tools. For the generalized QOLCT, the classical Heisenberg's and Lieb's uncertainty principles have been studied recently.  In this paper, we first define  the short-time quaternion offset linear canonical transform (ST-QOLCT)  and drive its
relationship with the quaternion Fourier transform
(QFT). The crux of the paper lies in the generalization of  several well known  uncertainty principles for the ST-QOLCT, including Donoho-Stark’s uncertainty principle, Hardy’s uncertainty principle, Beurling’s uncertainty principle,
 and
Logarithmic uncertainty principle.
 \\

{ Keywords:} Quaternion Fourier transform ; Quaternion offset linear canonical transform; Short-time quaternion offset linear canonical transform(ST-QOLCT) ;  Uncertainty principle.\\

\noindent
\textit{2000 Mathematics subject classification: } 42B10; 43A32; 94A12; 42A38;  30G30.}
\end{quotation}
\section{ \bf Introduction}
\noindent
The linear canonical transform (LCT) with four parameters $(a, b, c, d)$ has been generalized to a six parameter transform  $(a,b,c,d,u_0,w_0)$ known as offset linear canonical transform (OLCT). Due to the time shifting $u_0$ and frequency modulation parameters, the OLCT has gained more flexibility over classical LCT. Hence has found wide applications in image and signal processing.  The quaternion offset linear canonical transform (QOLCT) which is time-shifted and frequency-modulated version of the quaternion linear canonical transform (QLCT) provides a more general framework of most existing  signal processing tools. For more details we refer to \cite{{16}, {14}, {30}} and references therein.

Because of its wide applications in signal analysis, image processing and optics the quaternion offset linear canonical transform(QOLCT) has attained much universality in recent years. However, the QOLCT is inadequate for localizing the QOLCT-frequency of non-transient signals, as such, it is indispensable to introduce an eccentric localized transform coined as the short-time quaternion offset linear canonical transform(ST-QOLCT), which can effectively reveal the local QOLCT-frequency content of such signals. The ST-QOLCT enjoys high resolution, provides local Information and eliminates cross terms. The chrip signals can be better analysed through ST-QOLCT. We refer to \cite{{OWN1}, {8},  {gen}} for more details.

Let us now move to the side of uncertainty inequality. Uncertainty principle was introduced by German physicists Heisenberg \cite{hsb} in 1927 which is known as the  heart of any signal processing tool. With the passage of time researchers further extended the uncertainty principle to different types of new uncertainty principles associated with the Fourier Transform, for instance Heisenberg’s uncertainty principle, Logarithmic uncertainty principle,
Hardy’s uncertainty principle and Beurling’s uncertainty principle. Later these uncertainty principles were extended to Quaternion domain. In \cite{14,30} authors proposed uncertainty principles associated with the
OLCT and in \cite{8,19,WOLCT} authors establish uncertainty principles for the WLCT and WOLCT. Recently, the uncertainty principles associated with the
QOLCT were proposed in \cite{gen,feb23}. Also Gao and Li\cite{WLCT} recently developed uncertainty principles for two sided windowed linear canonical transform. Later M.Y Bhat and A.H Dar\cite{OWN1} establish uncertainty principles for 2D Gabor quaternion offset linear canonical transform.  Where as Lieb's uncertainty principle has been established in \cite{feb23}.  However,  Donoho-Stark's uncertainty principle,  Hardy's uncertainty principle  and Beurling's uncertainty principle have not been established for ST-QOLCT. Taking this opportunity,  we shall study  these uncertainty principles for the ST-QOLCT domain. 

The rest of paper is organised as follows. In Section 2, we provide some preliminaries needed for subsequent sections. In Section 3, we establish a relationship of ST-QOLCT with  QOLCT and QFT. In Section 4, we develop some novel uncertainty principles like Donoho-Stark's, Hardy's  and Beurling's. Finally we establish Logarithmic uncertainty principle using Pitt's Inequality.

\section{\bf Preliminaries}
\label{sec 2}
In this section,  we collect some basic facts on the quaternion algebra and the QFT, which will be needed throughout the paper.

\subsection{\bf Quaternion algebra} \  \\

In 1834 W. R. Hamilton introduced quaternion algebra by extension of the complex number to  an associative non-commutative 4D algebra. Denoted by $\mathbb{H}$ in his honor where every element of $\mathbb{H}$ has a Cartesian form given by
\begin{equation}\label{eqn 2.1}
\mathbb{H}=\left\lbrace q|q:=[q]_0+{i}[q]_1+{j}[q]_2+{k}[q]_3, [q]_i\in\mathbb{R}, i=0,1,2,3\right\rbrace
\end{equation}
where ${i}, {j}, {k}$ are imaginary units obeying Hamilton's multiplication rules:
\begin{equation}\label{eqn 2.2}
		{i}^2={j}^2={k}^2=-1,\\
\end{equation}

\begin{equation}\label{eqn 2.3}
{i}{j}=-{j}{i}={k},\,{j}{k}=-{k}{j}={i},\,{k}{i}
=-{i}{k}={j}.
\end{equation}
Let $[q]_{0}$ and $q={i}[q]_1+{j}[q]_2+{k}[q]_3$ denote the real scalar part and the vector part of quaternion number $q=[q]_0+\mathbf{i}[q]_1+{j}[q]_2+{k}[q]_3$, respectively. Then, from \cite{53}, the
real scalar part has a cyclic multiplication symmetry
\begin{equation}\label{eqn 2.4}
		[pql]_{0}=[qlp]_{0}=[lpq]_{0}, \qquad \forall q,p,l\in \mathbb{H},
\end{equation}
the conjugate of a quaternion $q$ is defined by $\overline{q}=[q]_0-{i}[q]_1-{j}[q]_2-{k}[q]_3$, and the norm of $q\in \mathbb{H}$ defined as
\begin{equation}\label{eqn 2.5}
\left| q\right|=\sqrt{q\bar{q}}=\sqrt{[q]_0^2+[q]_1^2+[q]_2^2+[q]_3^2}{ .}
\end{equation}
It is easy to verify that
\begin{equation}\label{eqn 2.6}
		\overline{pq}=\overline{q}  \overline{p}, \qquad |qp|=|q||p|, \quad \forall q,p\in \mathbb{H}.
\end{equation}
In this paper, we will study the  quaternion-valued signal $f:{\mathbb{R}}^2\to \mathbb{H} $, $f$ which can be expressed as $f=f_0+i f_1+jf_2+kf_3,$ with $f_m~:  {\mathbb R}^2 \to \ {\mathbb R}\ for\ m=0,1,2,3.$
 The quaternion inner product for quaternion valued signals $f,g\ :{\mathbb R}^2\ \to {\mathbb H}$, as follows:
 \begin{equation}\label{eqn 2.7}
\langle f,g\rangle = \int_{{\mathbb R}^2} {f\left(\mathbf x\right)\overline{g\left(\mathbf x\right)}}d\mathbf x
\end{equation}

where $\mathbf x=(x_1,x_2),$   $f(\mathbf x)=f(x_1,x_2),$ $\mathbf x=dx_1 dx_2,$ and so on.\\
Hence, the natural norm is given  by
\begin{equation}\label{eqn 2.8}
{\left|f\right|}_{2}=\sqrt{<f,f>} ={(\int_{{\mathbb R}^2}{{\left|f(\mathbf x)\right|}^2}d\mathbf x)}^{\frac{1}{2}},
\end{equation}
and the quaternion module $L^2({\mathbb R}^2,\ {\mathbb H})$, is given by
\begin{equation}\label{eqn 2.9}
L^2({\mathbb R}^2,\ {\mathbb H}) = \{f : {\mathbb {\mathbb R}^2\ \to {\mathbb H},\ {\left|f\right|}_{2}< \infty }\}.
\end{equation}

\begin{lemma}\label{lem 2.1}
If $f,g\in L^2(\mathbb{R}^2,\mathbb{H})$, then the Cauchy-Schwarz inequality holds[25]
\begin{equation}\label{2.10}
	\left| \langle f,g\rangle_{L^2(\mathbb{R}^2,\mathbb{H})}\right|^2\leq \|f\|_{L^2(\mathbb{R}^2,\mathbb{H})}^2\|g\|_{L^2(\mathbb{R}^2,\mathbb{H})} ^2.
\end{equation}
If and only if $f=i g$ for some quaternionic parameter $i \in\mathbb{H}$, the equality holds. \end{lemma}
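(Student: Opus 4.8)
The plan is to adapt the classical Hilbert-space argument to the noncommutative quaternionic setting, where the only genuine difficulty is keeping track of the side on which quaternionic scalars act. First I would dispose of the degenerate cases: if $\langle f,g\rangle=0$ or $\|g\|_{L^2}=0$, the asserted inequality reduces to $0\le\|f\|^2\|g\|^2$, which is immediate. So I may assume $q_0:=\langle f,g\rangle\neq 0$ and $\|g\|_{L^2}\neq 0$, writing $\|f\|^2=\langle f,f\rangle$ throughout.

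The core of the argument exploits nonnegativity of the norm. For a quaternionic scalar $c\in\mathbb{H}$ (to be chosen) I would expand
\[
0\le \langle f-cg,\,f-cg\rangle=\langle f,f\rangle-\langle f,cg\rangle-\langle cg,f\rangle+\langle cg,cg\rangle.
\]
This is exactly where noncommutativity must be handled. Using the definition \eqref{eqn 2.7} together with $\overline{pq}=\overline q\,\overline p$, a constant quaternion extracted from the \emph{second} slot emerges on the right, giving $\langle f,cg\rangle=q_0\overline c$; by the Hermitian symmetry $\overline{\langle f,g\rangle}=\langle g,f\rangle$ one gets $\langle cg,f\rangle=c\,\overline{q_0}$; and $\langle cg,cg\rangle=c\,\|g\|^2\,\overline c=|c|^2\|g\|^2$, since the real scalar $\|g\|^2$ commutes with $c$. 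The two cross terms combine through the scalar-part identity $q_0\overline c+c\overline{q_0}=2[q_0\overline c]_0$, so the inequality becomes the purely real relation $0\le \|f\|^2-2[q_0\overline c]_0+|c|^2\|g\|^2$.

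The decisive step is to specialize $c=t\,q_0$ with a real parameter $t$; then $[q_0\overline c]_0=t|q_0|^2$ and $|c|^2=t^2|q_0|^2$, and the estimate collapses to a genuine real quadratic,
\[
0\le \|f\|^2-2t\,|q_0|^2+t^2|q_0|^2\|g\|^2,\qquad \text{for every } t\in\mathbb{R}.
\]
Nonnegativity of a real quadratic with positive leading coefficient $|q_0|^2\|g\|^2$ forces its discriminant to be nonpositive, i.e.\ $|q_0|^4\le |q_0|^2\|f\|^2\|g\|^2$, and dividing by $|q_0|^2>0$ yields precisely $|\langle f,g\rangle|^2\le\|f\|^2\|g\|^2$.

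For the equality statement I would trace back through the quadratic: equality in Cauchy--Schwarz means the discriminant vanishes, so the quadratic has a real double root $t_\ast$ at which $\langle f-t_\ast q_0 g,\,f-t_\ast q_0 g\rangle=0$, whence $\|f-\lambda g\|_{L^2}=0$ with $\lambda=t_\ast q_0\in\mathbb{H}$, i.e.\ $f=\lambda g$ almost everywhere; the converse is a direct substitution. As flagged above, the main obstacle is purely the algebraic bookkeeping of left versus right quaternionic multiplication, and it is precisely this that dictates using the multiplier $cg$ (rather than $gc$) and the proportionality constant $c=t q_0$ that restores a real-coefficient quadratic.
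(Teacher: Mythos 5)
Your argument is correct, but there is nothing in the paper to compare it against: the authors state Lemma~\ref{lem 2.1} without any proof, simply citing it to the literature (the bracketed reference ``[25]''), so your proposal supplies a proof the paper omits rather than paralleling or diverging from one. The noncommutative bookkeeping, which is the only delicate point, is handled correctly: since $\overline{cg}=\overline{g}\,\overline{c}$, the constant exits the integral on the right, giving $\langle f,cg\rangle=\langle f,g\rangle\,\overline{c}$; the identity $\langle cg,cg\rangle=|c|^2\|g\|^2$ holds because $g\overline{g}=|g|^2$ is real and commutes with $c$; and the cross terms combine via $p+\overline{p}=2[p]_0$ into a real quantity. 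The specialization $c=t\,q_0$, $t\in\mathbb{R}$, then genuinely yields a real quadratic with positive leading coefficient $|q_0|^2\|g\|^2$ (using the standing assumptions $q_0\neq0$, $\|g\|\neq0$), and the discriminant computation delivers the inequality; the equality trace-back through the double root $t_\ast=1/\|g\|^2$, giving $f=\lambda g$ with $\lambda=q_0/\|g\|^2$, is likewise sound. One caveat you should flag explicitly: the equality clause, both in your last paragraph and in the lemma as the paper states it, fails in the degenerate case $g=0$, $f\neq0$, where equality holds ($0=0$) yet $f$ is not a left quaternionic multiple of $g$; your discriminant argument silently assumed $q_0\neq0$ and $\|g\|\neq0$, so the ``only if'' direction should either hypothesize $g\neq0$ or be stated symmetrically ($f=\lambda g$ or $g=\mu f$). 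This defect is inherited from the paper's formulation (which also unfortunately reuses the symbol $i$ for a generic quaternion), not introduced by you.
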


\parindent=0mm \vspace{.4in}

\subsection{The two-sided QFT } \  \\

The QFT belongs to the family of Clifford Fourier transformations.

 It is a generalization of the classical Fourier transform (CFT)  \cite{EL93}.Some useful properties, and theorems of this  transform are
generalizations of the corresponding properties and theorems of the classical Fourier transform with some
modifications.
There are three different types of QFT, the left-sided QFT , the right-sided QFT, and two-sided QFT \cite{PDC01}. In this paper our focus shall be on two-sided QFT. So from  here on  by QFT we mean two-sided quaternion Fourier transform.
Let us  begin with definition of the two-sided QFT and provide some properties used in the sequel.
\begin{definition}\label{def 2.1}
(Two-sided QFT.)\\
For  $f\in L^1\left({\mathbb R^2},{\mathbb H}\right)$ the two-sided QFT with respect to unit quaternions $i ; j $ is given by\\
\begin{equation}\label{QFT}
{\mathcal F}^{i, j }[f](\mathbf w) =\int_{{\mathbb R}^2}{e^{-i {{ w}}_1x_1}}\ f(\mathbf x)\ e^{-j {{ w}}_2x_2}dt,~~ where ~\mathbf{x, w}\in {\mathbb R}^2.
\end{equation}
\end{definition}

We define the modulus of $\mathcal F[f]^{i, j } $ \ as follows :
\begin{equation}\label{modulus}
{\left|{\mathcal F}^{i, j }[f]\right|}\ := \sqrt{\sum^{m=3}_{m=0}{{\left|{\mathcal F}^{i, j }\left[f_m\right]\right|}^2}}.\end{equation}

Furthermore, we define a new $L^2$-norm of ${\mathcal F[f]}$ as follows :\\
\begin{equation}\label{norm}
{\left\|{\mathcal F}^{i, j }[f]\right\|}_{2}:=\sqrt{\int_{\mathbb {\mathbb R}^2}{{\left|{\mathcal F}^{i, j }\left[f\right](y)\right|}^2dy}}.\end{equation}

\begin{lemma}\label{dilation}{\bf(Dilation property)} \ \\
Let $k_1, k_2$ be a positive scalar constants,  we have
\begin{equation}
{\mathcal F}^{i, j }\left[f(x_1,x_2)\right]\left(\frac{w_1}{k_1},\frac{w_2}{k_1}\right)=k_1k_2{\mathcal F}^{i, j }\left[f(k_1x_1,k_2x_2)\right]\left(w_1,w_2\right).
\end{equation}
we can also write it as
\begin{equation}
{\mathcal F}^{i, j }\left[f(\mathbf x)\right]\left(\frac{\mathbf w}{\mathbf k}\right)=\mathbf k{\mathcal F}^{i, j }\left[f(\mathbf {kx})\right]\left(\mathbf w\right).
\end{equation}
\end{lemma}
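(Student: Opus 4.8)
The plan is to establish the dilation identity by a direct change of variables in the defining integral \eqref{QFT}, working from the scaled signal back to the transform of $f$. I would begin with the right-hand side and write out
\[
k_1 k_2\,{\mathcal F}^{i, j }\!\left[f(k_1 x_1, k_2 x_2)\right](w_1,w_2) = k_1 k_2 \int_{{\mathbb R}^2} e^{-i w_1 x_1}\, f(k_1 x_1, k_2 x_2)\, e^{-j w_2 x_2}\, dx_1\, dx_2,
\]
so that the task reduces to recognizing the integral on the right as the QFT of $f$ evaluated at a rescaled frequency.

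The key step is the substitution $u_1 = k_1 x_1$, $u_2 = k_2 x_2$, under which $x_1 = u_1/k_1$, $x_2 = u_2/k_2$, and the Jacobian contributes $dx_1\, dx_2 = (k_1 k_2)^{-1}\, du_1\, du_2$. Since $k_1,k_2$ are positive scalars, orientation is preserved and no absolute-value correction is needed. After substitution the prefactor $k_1 k_2$ cancels exactly against the Jacobian factor, leaving
\[
\int_{{\mathbb R}^2} e^{-i (w_1/k_1) u_1}\, f(u_1, u_2)\, e^{-j (w_2/k_2) u_2}\, du_1\, du_2 = {\mathcal F}^{i, j }[f]\!\left(\frac{w_1}{k_1}, \frac{w_2}{k_2}\right),
\]
which is precisely the left-hand side. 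The vector form then follows by writing $\mathbf{k} = k_1 k_2$ and reading $\mathbf{w}/\mathbf{k}$ componentwise.

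The only point requiring care is the non-commutativity of $\mathbb{H}$: because the exponential kernels sit on both sides of $f$ in the two-sided transform, one must check that the scalar prefactor $k_1 k_2$ and the Jacobian scalar may be moved freely through the integrand. This is immediate, since real scalars lie in the center of the quaternion algebra and hence commute with the quaternion-valued exponentials $e^{-i w_1 x_1}$, $e^{-j w_2 x_2}$ and with $f$ itself; consequently the rearrangement is legitimate and the remainder of the argument is routine. Thus I do not anticipate any genuine obstacle beyond bookkeeping the placement of the positive scalars relative to the two-sided kernel.
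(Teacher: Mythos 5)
Your change-of-variables argument is correct, and it is the standard proof; the paper actually states this lemma with no proof at all, so there is nothing in the source to diverge from, and your substitution $u_1=k_1x_1$, $u_2=k_2x_2$ with Jacobian $(k_1k_2)^{-1}$ cancelling the prefactor is exactly the expected route. One point worth flagging: the lemma as printed reads ${\mathcal F}^{i,j}[f]\left(\frac{w_1}{k_1},\frac{w_2}{k_1}\right)$, with $k_1$ in both denominators, whereas your computation (correctly) produces $\frac{w_2}{k_2}$ in the second slot --- this is evidently a typo in the paper, as confirmed by the componentwise vector form $\mathbf{w}/\mathbf{k}$, so you have proved the intended identity rather than the misprinted one. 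Your handling of non-commutativity is also sound: since the center of $\mathbb{H}$ is $\mathbb{R}$, the real scalars $k_1k_2$ and the Jacobian factor pass freely through the two-sided kernel $e^{-iw_1x_1}\,f\,e^{-jw_2x_2}$, which is the only subtlety the quaternionic setting adds.
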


\begin{lemma}\label{Plancherel}{\bf(QFT Plancherel)}\\
Let $f \in L^2({\mathbb R}^2,{\mathbb H})$, then
\begin{equation}
\int_{{\mathbb R}^2}{{\left|{\mathcal F}^{i,j }\left[f\right]\left(\mathbf w\right)\right|}^2}d\mathbf w=4{\pi }^2\int_{{\mathbb R}^2}{{\left|f(\mathbf x)\right|}^2}d\mathbf x.
\end{equation}
\end{lemma}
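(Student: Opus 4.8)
The plan is to reduce the quaternionic identity to the classical two-dimensional Plancherel theorem by exploiting the component structure already built into the definition \eqref{modulus} of the modulus. First I would observe that, since $|f(\mathbf{x})|^2 = \sum_{m=0}^3 f_m(\mathbf{x})^2$ and, by \eqref{modulus}, $\left|\mathcal{F}^{i,j}[f]\right|^2 = \sum_{m=0}^3 \left|\mathcal{F}^{i,j}[f_m]\right|^2$, it suffices to establish the identity for each real-valued component $f_m$ separately,
\[
\int_{\mathbb{R}^2} \left|\mathcal{F}^{i,j}[f_m](\mathbf{w})\right|^2 d\mathbf{w} = 4\pi^2 \int_{\mathbb{R}^2} f_m(\mathbf{x})^2 d\mathbf{x},
\]
and then sum over $m=0,1,2,3$. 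This step isolates the genuinely quaternionic difficulty into the scalar case.

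Second, for a real-valued $f_m$ I would expand the two one-sided kernels as $e^{-iw_1x_1} = \cos(w_1x_1) - i\sin(w_1x_1)$ and $e^{-jw_2x_2} = \cos(w_2x_2) - j\sin(w_2x_2)$, multiply them out using the rule $ij=k$ from \eqref{eqn 2.3}, and collect real coefficients. This writes $\mathcal{F}^{i,j}[f_m]$ as a quaternion whose four real coordinates are the mixed cosine/sine integrals $F^{cc},F^{sc},F^{cs},F^{ss}$ of $f_m$, so that $\left|\mathcal{F}^{i,j}[f_m]\right|^2 = (F^{cc})^2 + (F^{sc})^2 + (F^{cs})^2 + (F^{ss})^2$.

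The crux is to relate this sum of four squares to a genuine complex Fourier transform, and here the main obstacle is precisely the non-commutativity: because the left kernel carries $i$ while the right kernel carries $j$, the two exponentials cannot be merged into a single complex exponential as in the classical factorization. I would circumvent this by introducing the ordinary complex Fourier transform $\hat{f_m}(\mathbf{w}) = \int_{\mathbb{R}^2} f_m(\mathbf{x})\, e^{-i(w_1x_1 + w_2x_2)}\, d\mathbf{x}$ and verifying, by the same trigonometric expansion, the pointwise identity
\[
\left|\mathcal{F}^{i,j}[f_m](w_1,w_2)\right|^2 = \tfrac{1}{2}\left(\left|\hat{f_m}(w_1,w_2)\right|^2 + \left|\hat{f_m}(w_1,-w_2)\right|^2\right),
\]
which holds because the cross terms $F^{cc}F^{ss}$ and $F^{sc}F^{cs}$ cancel upon averaging over the two sign choices of $w_2$.

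Finally, I would integrate this pointwise identity over $\mathbf{w}\in\mathbb{R}^2$. Each term on the right integrates to $4\pi^2 \int_{\mathbb{R}^2} f_m^2\, d\mathbf{x}$ by the classical complex Plancherel theorem in two dimensions, the reflection $w_2 \mapsto -w_2$ being measure preserving; thus the factor $\tfrac12$ and the two equal contributions recombine to give exactly $4\pi^2\int f_m^2$. Summing over $m$ and invoking the two relations from the first step yields the claimed quaternionic Plancherel formula. A routine density argument from $L^1\cap L^2$ to all of $L^2(\mathbb{R}^2,\mathbb{H})$ would handle convergence of the kernel expansion, since the transform in Definition~\ref{def 2.1} is stated only for $L^1$ inputs.
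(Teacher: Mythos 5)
Your proof is correct, but note that the paper offers no proof of this lemma at all: Lemma~\ref{Plancherel} is stated there as a known fact imported from the QFT literature (cf.\ \cite{CKL15,17}), so your self-contained elementary derivation is genuinely different from what the paper does. The computations check out: for real $f_m$ the expansion $\mathcal{F}^{i,j}[f_m]=F^{cc}-iF^{sc}-jF^{cs}+kF^{ss}$ is exactly right, and since $F^{cc},F^{sc}$ are even while $F^{cs},F^{ss}$ are odd under $w_2\mapsto -w_2$, your averaged identity $\left|\mathcal{F}^{i,j}[f_m](w_1,w_2)\right|^2=\tfrac12\left(\left|\hat{f_m}(w_1,w_2)\right|^2+\left|\hat{f_m}(w_1,-w_2)\right|^2\right)$ holds pointwise as claimed (the cross terms $\mp 2F^{cc}F^{ss}$ and $\pm 2F^{sc}F^{cs}$ cancel in the average), after which reflection invariance of Lebesgue measure and the classical Plancherel constant $(2\pi)^2$ in dimension two produce exactly the factor $4\pi^2$. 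One caveat worth flagging: your first reduction is legitimate only because the paper \emph{defines} the modulus of $\mathcal{F}^{i,j}[f]$ componentwise in \eqref{modulus}; this is not the pointwise quaternion modulus of the quaternion value $\mathcal{F}^{i,j}[f](\mathbf{w})$. For quaternion-valued $f$ the left kernel fails to commute with the $j$ and $k$ parts (indeed $e^{-iw_1x_1}j=je^{iw_1x_1}$), so $\left|\mathcal{F}^{i,j}[f](\mathbf{w})\right|^2$ as a quaternion norm does not equal $\sum_m\left|\mathcal{F}^{i,j}[f_m](\mathbf{w})\right|^2$ pointwise in general; with the genuine modulus the Plancherel identity still holds, but one must show the integrated cross terms vanish (or invoke the orthogonal plane split of Hitzer--Sangwine \cite{HS13}). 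Under the paper's convention \eqref{modulus} your component reduction is immediate, and your concluding density extension from $L^1\cap L^2$ to $L^2(\mathbb{R}^2,\mathbb{H})$ is standard, so the argument is complete.
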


\begin{lemma}\label{inverse}{\bf (Inverse QFT)}

If  $f\in L^1\left({\mathbb{R}}^2, {\mathbb{H}}\right), and \ {\mathcal F}^{i,j}[f]\in L^1\left({\mathbb{R}}^2,{\mathbb{H}}\right)$, then the two-sided QFT is an invertible transform and its inverse is given
by\\
\begin{equation}
f(\mathbf x)= \frac{1}{{(2\pi )}^2} \int_{{\mathbb{R}}^2}{e^{i w_1 x_1}} {\mathcal F}^{i,j} [f(\mathbf x)](\mathbf w) e^{j w_2 x_2}d\mathbf w.
\end{equation}
\end{lemma}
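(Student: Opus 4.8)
The plan is to prove the inversion formula by \emph{factoring} the two-sided QFT into a composition of two one-dimensional ``partial'' transforms and then reducing each factor to the classical (complex) one-dimensional Fourier inversion theorem. The one genuine difficulty is the non-commutativity of $\mathbb{H}$: because the left kernel $e^{-iw_1x_1}$ and the right kernel $e^{-jw_2x_2}$ sit on opposite sides of $f(\mathbf x)$, one cannot merge exponentials as in the commutative setting. The device that circumvents this is to expand $f$ in a maximal commutative subalgebra of $\mathbb{H}$ adapted to whichever kernel is being inverted.

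First I would write $\mathcal{F}^{i,j}[f]$ as an iterated integral, legitimate by Fubini under the hypothesis $f\in L^1$, and introduce the inner partial transform in $x_1$,
\begin{equation}
\mathcal{G}[f](w_1,x_2):=\int_{\mathbb{R}}e^{-iw_1x_1}\,f(x_1,x_2)\,dx_1,
\end{equation}
so that $\mathcal{F}^{i,j}[f](\mathbf w)=\int_{\mathbb{R}}\mathcal{G}[f](w_1,x_2)\,e^{-jw_2x_2}\,dx_2$. To invert $\mathcal{G}$ I would use that every quaternion decomposes uniquely as $a=\alpha+\beta j$ with $\alpha=a_0+ia_1$ and $\beta=a_2+ia_3$ lying in the commutative subfield $\mathbb{C}_i:=\mathrm{span}_{\mathbb{R}}\{1,i\}$. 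Writing $f(\cdot,x_2)=\alpha(\cdot)+\beta(\cdot)j$ and using that the left kernel $e^{-iw_1x_1}\in\mathbb{C}_i$ commutes with $\alpha,\beta$ while $j$ passes out on the right by associativity, the transform $\mathcal{G}$ splits into two ordinary $\mathbb{C}_i$-valued Fourier transforms, of $\alpha$ and of $\beta$. Applying the classical one-dimensional inversion theorem to each component then yields
\begin{equation}
f(x_1,x_2)=\frac{1}{2\pi}\int_{\mathbb{R}}e^{iw_1x_1}\,\mathcal{G}[f](w_1,x_2)\,dw_1.
\end{equation}

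Next I would invert the outer $x_2$-transform by the mirror-image argument. Here I decompose each quaternion as $a=\gamma+i\delta$ with $\gamma=a_0+ja_2$ and $\delta=a_1+ja_3$ in $\mathbb{C}_j:=\mathrm{span}_{\mathbb{R}}\{1,j\}$, so that the right kernel $e^{-jw_2x_2}\in\mathbb{C}_j$ commutes with $\gamma,\delta$ while $i$ passes out on the left; again the partial transform reduces to two scalar Fourier transforms and classical inversion produces a second factor $\tfrac{1}{2\pi}$. Composing the two inversions and collecting constants gives $\tfrac{1}{(2\pi)^2}=\tfrac{1}{4\pi^2}$, and associativity of quaternion multiplication keeps $e^{iw_1x_1}$ on the left of $\mathcal{F}^{i,j}[f]$ and $e^{jw_2x_2}$ on its right, producing exactly the claimed formula.

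The main obstacle, and the step I would treat most carefully, is precisely this bookkeeping of sides: the proof must never attempt to combine $e^{iw_1x_1}$ with the inner $e^{-iw_1x_1'}$ directly, since they are separated by $f$, and it must check that each commutative-subalgebra splitting is compatible with the side on which the corresponding kernel acts. The integrability required to invoke the classical inversion theorem and Fubini at each stage follows routinely from $f,\mathcal{F}^{i,j}[f]\in L^1(\mathbb{R}^2,\mathbb{H})$ together with the componentwise bounds $|\alpha|,|\beta|,|\gamma|,|\delta|\le|a|$, so I would treat those estimates as standard and not dwell on them.
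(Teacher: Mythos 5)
The paper itself states this lemma without proof, as a standard preliminary imported from the QFT literature, so there is no in-paper argument to compare against and your proposal must stand on its own. Its architecture is the right one: factor the two-sided QFT into a left-kernel partial transform in $x_1$ and a right-kernel partial transform in $x_2$, split the quaternion signal over the commutative subalgebras $\mathbb{C}_i=\operatorname{span}_{\mathbb{R}}\{1,i\}$ and $\mathbb{C}_j=\operatorname{span}_{\mathbb{R}}\{1,j\}$ adapted to the side on which each kernel acts, and apply classical one-dimensional inversion componentwise. Your bookkeeping of sides is correct throughout.

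The genuine gap is in your final paragraph, where you declare the integrability hypotheses ``routine''. The componentwise bounds $|\alpha|,|\beta|\le|a|$ only transfer $L^1$-ness between a quaternion function and its $\mathbb{C}_i$- or $\mathbb{C}_j$-components; they do not supply the hypothesis you actually need to invert the \emph{inner} transform, namely $\mathcal{G}[f](\cdot,x_2)\in L^1(\mathbb{R},dw_1)$ for a.e.\ $x_2$. Neither $f\in L^1(\mathbb{R}^2,\mathbb{H})$ nor $\mathcal{F}^{i,j}[f]\in L^1(\mathbb{R}^2,\mathbb{H})$ gives this directly: integrability of the full transform over $\mathbb{R}^2$ says nothing a priori about the intermediate object $\mathcal{G}[f]$, which is merely bounded in $w_1$. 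As written --- inner inversion first, outer second --- your first step is unjustified. The repair stays inside your framework but forces the opposite order: first invert the outer ($x_2$) transform, which is legitimate since $|\mathcal{G}[f](w_1,x_2)|\le\int_{\mathbb{R}}|f(x_1,x_2)|\,dx_1$, a function lying in $L^1(dx_2)$, and since Fubini gives $\mathcal{F}^{i,j}[f](w_1,\cdot)\in L^1(dw_2)$ for a.e.\ $w_1$; this yields $\mathcal{G}[f](w_1,x_2)=\frac{1}{2\pi}\int_{\mathbb{R}}\mathcal{F}^{i,j}[f](w_1,w_2)\,e^{jw_2x_2}\,dw_2$ a.e., whence the domination $|\mathcal{G}[f](w_1,x_2)|\le\frac{1}{2\pi}\int_{\mathbb{R}}|\mathcal{F}^{i,j}[f](w_1,w_2)|\,dw_2$, and the right-hand side \emph{is} in $L^1(dw_1)$ precisely because $\mathcal{F}^{i,j}[f]\in L^1(\mathbb{R}^2,\mathbb{H})$. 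Only after this chaining step may the inner inversion be applied, and the conclusion then holds for a.e.\ $\mathbf{x}$ --- a qualification your statement of the result should also carry. Alternatively, a Gaussian summability argument, inserting the central (real-valued, hence freely movable) factor $e^{-\epsilon|\mathbf{w}|^2}$ into the double inversion integral and letting $\epsilon\to 0$, proves the formula in one stroke and sidesteps the iterated-integrability issue entirely.
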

\subsection{\bf Quaternion offset linear canonical transform (QOLCT) } \ \\
The quaternion linear canonical transform(QLCT) is a generalization of the linear canonical transform(LCT) firstly defined by Kou, et al. \cite{50,16} . Later in \cite{Hitzer2}  Hitzer.E generalized the definitions of  Kou, etl to introduce two-sided QLCT. In this paper, we mainly focus on the two-sided QLCT.
\begin{definition}{\bf(Quaternion Linear Canonical Transform.)}\ \\
 Let
	$A_s=\begin{bmatrix}
	a_s&b_s\\
	c_s&d_s
	\end{bmatrix}\in \mathbb{R}^{2\times 2}$ be a matrix parameter satisfying ${\rm det}(A_s)=1$, for $s=1,2$. The two-sided QLCT of signal $f\in L^2\left( \mathbb{R}^2,\mathbb{H}\right)$ is defined by
	\begin{align}
	\label{dQLCT}
	\mathcal{L}_{A_1,A_2}[f](\mathbf{w})=
\int_{\mathbb{R}^2}K_{A_1}^{{i}}(x_1,\omega_1)f(\mathbf{x})K_{A_2}^{{j}}(x_2,\omega_2)\rm{d}\mathbf{x},
	\end{align}	
	where $\mathbf{w}=(\omega_1,\omega_2)\in \mathbb{R}^{2}$ is regarded as the QLCT domain, and  the kernel signals $K_{A_1}^{{i}}(x_1,\omega_1)$, $K_{A_2}^{{j}}(x_2,\omega_2)$ are respectively given by
	\begin{align}
		\begin{split}
			K_{A_1}^{{i}}(x_1,\omega_1):=\begin{cases}
		\frac{1}{\sqrt{2\pi i b_1}}e^{{i}\left(\frac{a_1}{2b_1}x_1^2-\frac{x_1\omega_1}{b_1}+\frac{d_1}{2b_1}\omega_1^2 \right) },   &b_1\neq0  \\
		\sqrt{d_1}e^{{i}\frac{c_1d_1}{2}\omega_1^2}\delta(x_{1}-d_{1}w_{1}),    &b_1=0
		\end{cases}
		\end{split}
	\end{align}
	and
	\begin{align}
		\begin{split}
			K_{A_2}^{{j}}(x_2,\omega_2):=\begin{cases}
		\frac{1}{\sqrt{2\pi j b_2}}e^{{j}\left(\frac{a_2}{2b_2}x_2^2-\frac{x_2\omega_2}{b_2}+\frac{d_2}{2b_2}\omega_2^2 \right) },   &b_2\neq0  \\
		\sqrt{d_2}e^{{j}\frac{c_2d_2}{2}\omega_2^2}\delta(x_{2}-d_{2}w_{2}),    &b_2=0
		\end{cases}
		\end{split}
	\end{align}
where $\delta(x)$ representing the Dirac function.
\end{definition}
Here we note that for $b_s=0,s=1,2$ the QLCT of a signal boils down to chirp multiplication operations, and it is of no particular interest for our objective in this work. So without loss of generality, we set $b_s\neq0$ in rest of paper.

\begin{lemma} Suppose $f\in L^2\left( \mathbb{R}^2,\mathbb{H}\right)$, then the inversion of the QLCT of $f$ is given by
	\begin{align}
		\begin{split}
		\label{eIQLCT}
		f(\mathbf{x})&=\mathcal{L}_{A_1,A_2}^{-1}[\mathcal{L}_{A_1,A_2}[f]](\mathbf{x})\\
&=\int_{\mathbb{R}^2}K_{A_1}^{-{i}}(x_1,\omega_1)
\mathcal{L}_{A_1,A_2}\left\{f\right\}(\mathbf{w})K_{A_2}^{-{j}}(x_2,\omega_2)\rm{d}\mathbf{w}.
		\end{split}
	\end{align}	
\end{lemma}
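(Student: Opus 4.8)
The plan is to deduce the QLCT inversion directly from the QFT inversion already recorded in Lemma~\ref{inverse}, exploiting the fact that the QLCT kernel is a pure Fourier kernel dressed with quadratic chirp factors and constants. First I would peel the chirps off each one-dimensional kernel, writing
\[
K_{A_1}^{i}(x_1,\omega_1)=C_1(\omega_1)\,e^{-i x_1\omega_1/b_1}\,e^{i\frac{a_1}{2b_1}x_1^2},\qquad
K_{A_2}^{j}(x_2,\omega_2)=e^{j\frac{a_2}{2b_2}x_2^2}\,e^{-j x_2\omega_2/b_2}\,C_2(\omega_2),
\]
where $C_1(\omega_1)=\frac{1}{\sqrt{2\pi i b_1}}e^{i\frac{d_1}{2b_1}\omega_1^2}$ and $C_2(\omega_2)=\frac{1}{\sqrt{2\pi j b_2}}e^{j\frac{d_2}{2b_2}\omega_2^2}$. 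Since exponentials carrying the same imaginary unit commute, I may reorder the $i$-factors freely on the left of $f$ and the $j$-factors freely on its right; absorbing the $x$-dependent chirps into the signal by $g(\mathbf{x})=e^{i\frac{a_1}{2b_1}x_1^2}f(\mathbf{x})e^{j\frac{a_2}{2b_2}x_2^2}$ then collapses the kernel product into the Fourier kernel and yields the key identity
\[
\mathcal{L}_{A_1,A_2}[f](\mathbf{w})=C_1(\omega_1)\,\mathcal{F}^{i,j}[g]\!\left(\frac{\omega_1}{b_1},\frac{\omega_2}{b_2}\right)C_2(\omega_2).
\]

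Next I would solve this for the QFT of $g$, namely $\mathcal{F}^{i,j}[g](\omega_1/b_1,\omega_2/b_2)=C_1(\omega_1)^{-1}\mathcal{L}_{A_1,A_2}[f](\mathbf{w})\,C_2(\omega_2)^{-1}$ with $C_1(\omega_1)^{-1}=\sqrt{2\pi i b_1}\,e^{-i\frac{d_1}{2b_1}\omega_1^2}$ and similarly for $C_2$, and feed it into the QFT inversion formula of Lemma~\ref{inverse} applied to $g$. The change of variables $u_s=\omega_s/b_s$, legitimate because $b_s\neq0$, converts the reconstruction of $g(\mathbf{x})$ into an integral over $\mathbf{w}$ with integrand $e^{i\omega_1 x_1/b_1}\,C_1(\omega_1)^{-1}\,\mathcal{L}_{A_1,A_2}[f](\mathbf{w})\,C_2(\omega_2)^{-1}\,e^{j\omega_2 x_2/b_2}$ and Jacobian $\frac{1}{|b_1 b_2|}$.

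Finally I would undo the chirp multiplication through $f(\mathbf{x})=e^{-i\frac{a_1}{2b_1}x_1^2}g(\mathbf{x})e^{-j\frac{a_2}{2b_2}x_2^2}$. Gathering the left $i$-factors produces $\sqrt{2\pi i b_1}\,e^{-i\left(\frac{a_1}{2b_1}x_1^2-\frac{x_1\omega_1}{b_1}+\frac{d_1}{2b_1}\omega_1^2\right)}$, whose exponential is exactly that of $K_{A_1}^{-i}(x_1,\omega_1)$, and symmetrically the right $j$-factors reproduce the exponential of $K_{A_2}^{-j}(x_2,\omega_2)$. It then remains only to confirm that the accumulated scalar $\frac{1}{(2\pi)^2|b_1 b_2|}\sqrt{2\pi i b_1}\sqrt{2\pi j b_2}$ equals $\frac{1}{\sqrt{-2\pi i b_1}}\frac{1}{\sqrt{-2\pi j b_2}}$, a routine check using $\sqrt{2\pi i b_s}\,\sqrt{-2\pi i b_s}=2\pi|b_s|$ with one fixed branch of the quaternionic square root. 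This reassembles the integrand into $K_{A_1}^{-i}\,\mathcal{L}_{A_1,A_2}[f]\,K_{A_2}^{-j}$ and gives the claimed formula.

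The step I expect to be the main obstacle is the disciplined bookkeeping forced by noncommutativity and by the branch-valued normalizing constants: every factor moved across the integral sign or to one side of $f$ must be verified to carry only $i$ (if it travels left) or only $j$ (if it travels right), and the constants $1/\sqrt{\pm 2\pi i b_s}$ must be tracked with one consistent branch so that the prefactors cancel precisely. By contrast the analytic content is mild, since $|g(\mathbf{x})|=|f(\mathbf{x})|$ shows the chirp multiplication is an isometry of $L^2(\mathbb{R}^2,\mathbb{H})$, so $g\in L^2(\mathbb{R}^2,\mathbb{H})$ and Lemma~\ref{inverse} applies to it.
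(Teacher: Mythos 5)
Your proof is correct in substance, but there is nothing in the paper to compare it against: the paper states this inversion lemma with no proof at all, treating it as a known fact from the QLCT literature (Kou et al., Hitzer). So you have supplied an argument the authors omitted. Your route --- peeling the $\omega$-chirps and constants off each kernel, absorbing the $x$-chirps into $g(\mathbf{x})=e^{i\frac{a_1}{2b_1}x_1^2}f(\mathbf{x})e^{j\frac{a_2}{2b_2}x_2^2}$, reading off $\mathcal{L}_{A_1,A_2}[f](\mathbf{w})=C_1(\omega_1)\,\mathcal{F}^{i,j}[g]\bigl(\tfrac{\omega_1}{b_1},\tfrac{\omega_2}{b_2}\bigr)C_2(\omega_2)$, and then inverting via Lemma~\ref{inverse} with the substitution $u_s=\omega_s/b_s$ --- is exactly the chirp-factorization technique the paper itself deploys later (Lemma~\ref{QFT_STQOLCT}, reducing the ST-QOLCT to the QFT), so your proof is stylistically consonant with the paper's methods. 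Your noncommutativity bookkeeping is handled correctly: every factor you move is confined to the $i$-complex subalgebra on the left of $f$ or the $j$-complex subalgebra on the right, and the constant check $\sqrt{2\pi i b_s}\,\sqrt{-2\pi i b_s}=2\pi|b_s|$ does hold with the principal branch in each commutative subalgebra, since $2\pi i b_s$ is purely imaginary there.

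One caveat deserves flagging, though it is inherited from the paper rather than introduced by you. Lemma~\ref{inverse} is stated for $g\in L^1(\mathbb{R}^2,\mathbb{H})$ with $\mathcal{F}^{i,j}[g]\in L^1(\mathbb{R}^2,\mathbb{H})$, whereas the lemma you are proving assumes only $f\in L^2(\mathbb{R}^2,\mathbb{H})$; your observation that $|g|=|f|$ gives $g\in L^2$ but not the $L^1$ hypotheses, so strictly your final step needs either those extra integrability assumptions or the standard Plancherel-type extension of QFT inversion to $L^2$ (pointwise equality then holding a.e., via density of $L^1\cap L^2$). Since the paper asserts the $L^2$ statement without any justification, your proof is already more rigorous than the source; just replace ``Lemma~\ref{inverse} applies to it'' with an explicit appeal to the $L^2$ inversion theory, and the argument is complete.
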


\parindent=0mm \vspace{.1in}

 We now generalize the definitions of \cite{HI14,HS13} as follows:

\parindent=0mm \vspace{.1in}

\begin{definition}{\bf(QOLCT.)}
Let $A_s =\left[\begin{array}{cccc}a_s & b_s &| & p_s\\c_s & d_s &| & q_s \\\end{array}\right] $, be a matrix parameter such that $a_s$, $b_s$, $c_s$, $d_s$, $p_s$, $q_s \in \mathbb R,\quad b_s\ne 0$ and $ a_sd_s-b_sc_s=1,$ for $s=1,2.$ The two-sided quaternion offset linear canonical transform of any quaternion valued function $f\in L^2(\mathbb R^2,\mathbb H)$, is given by
\begin{equation}\label{dQOLCT}
\mathcal O_{A_1,A_2}^{i,j}\big[f(\mathbf x)\big](\mathbf w)= \int_{\mathbb R^2} K_{A_1}^i(x_1,w_1) \,f(\mathbf x)\,K_{A_2}^j(x_2,w_2)d\mathbf x
\end{equation}
where $\mathbf x=(x_{1},x_{2}),\, w=(w_{1},w_{2})$ and the kernel signals $K_{A_1}^{{i}}(x_1,w_1)$ and $K_{A_2}^j(x_2,w_2)$ are respectively given by
\begin{equation}\label{eqn k1}
K_{A_1}^{{i}}(x_1,w_1)=\dfrac{1}{\sqrt{2\pi b_1{i}}} \,e^{\frac{{i}}{2b_1}\big[a_1x_1^2-2x_1(w_1-p_1)-2w_1(d_1p_1-b_1q_1)+d_1(w_1^2+p_1^2)\big]},b_1\neq0\,
\end{equation}

\begin{equation}\label{eqn k2}
K_{A_2}^{{j}}(x_2,w_2)= \dfrac{1}{\sqrt{2\pi b_2{j}}} \,e^{\frac{{j}}{2b_2}\big[a_2x_2^2-2x_2(w_2-p_2)-2w_2(d_2p_2-b_2q_2)+d_2(w_2^2+p_2^2)\big]},\,b_2\neq0\,
\end{equation}
\end{definition}

 {\bf Note:}The left-sided and right-sided QOLCT can be defined correspondingly by placing the two above kernels both on the left or on the right, respectively.
\parindent=0mm \vspace{.1in}

 \begin{lemma} Suppose $f\in L^2(\mathbb R^2,\mathbb H)$ then the inversion of two-sided QOLCT is given by
\begin{align*}\label{IQOLCT}
f(\mathbf x) &=\int_{\mathbb R^2}{K_{A_1}^{-i}(x_1,w_1)} \, \mathcal O_{A_1,A_2}^{i,j}\big[f\big](\mathbf w)\,{K_{A_2}^{-j}(x_2,w_2)} \,d\mathbf w.
\end{align*}
\end{lemma}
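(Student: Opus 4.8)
The plan is to verify the formula by direct substitution, reducing it to the Fourier inversion theorem for the QFT (Lemma \ref{inverse}). First I would insert the definition \eqref{dQOLCT} of $\mathcal O_{A_1,A_2}^{i,j}[f]$ into the right-hand side, renaming the integration variable there to $\mathbf y=(y_1,y_2)$, and then invoke Fubini's theorem to interchange the $\mathbf w$- and $\mathbf y$-integrations. This produces an iterated integral in which the inner kernel $K_{A_1}^{-i}(x_1,w_1)$ stands to the \emph{left} of $f(\mathbf y)$ and $K_{A_2}^{-j}(x_2,w_2)$ to its \emph{right}, with $K_{A_1}^{i}(y_1,w_1)$ and $K_{A_2}^{j}(y_2,w_2)$ sandwiched from the forward transform.

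The key observation is a separation of variables forced by the two-sided structure. The two left factors $K_{A_1}^{-i}(x_1,w_1)\,K_{A_1}^{i}(y_1,w_1)$ depend only on $w_1$ and take values in the subalgebra generated by $i$, while the two right factors $K_{A_2}^{j}(y_2,w_2)\,K_{A_2}^{-j}(x_2,w_2)$ depend only on $w_2$ and take values in the subalgebra generated by $j$. Since elements of each such commuting subalgebra multiply freely among themselves, I can merge each pair and pull the $w_1$-integral to the far left and the $w_2$-integral to the far right, leaving $f(\mathbf y)$ fixed in the middle. In the product $K_{A_1}^{-i}(x_1,w_1)K_{A_1}^{i}(y_1,w_1)$, the quadratic phase $d_1(w_1^2+p_1^2)$, the offset phase $-2w_1(d_1p_1-b_1q_1)$ and the $p_1$-terms are common to both kernels with opposite signs, so they cancel; what survives is the linear exponent $e^{\frac{i}{b_1}(x_1-y_1)w_1}$ and a residual chirp $e^{\frac{i a_1}{2b_1}(y_1^2-x_1^2)}$, times the modulus $\tfrac{1}{2\pi|b_1|}$ arising from $\tfrac{1}{\sqrt{2\pi b_1 i}}$ and its conjugate $\tfrac{1}{\sqrt{-2\pi b_1 i}}$.

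Carrying out $\int_{\mathbb R}e^{\frac{i}{b_1}(x_1-y_1)w_1}\,dw_1=2\pi|b_1|\,\delta(x_1-y_1)$ absorbs the normalization and yields $\delta(x_1-y_1)$, and an identical computation on the $j$-side gives $\delta(x_2-y_2)$. On the support of the deltas the residual chirps evaluate to $1$, so the remaining $\mathbf y$-integral collapses to $f(x_1,x_2)=f(\mathbf x)$, which is exactly the asserted inversion formula.

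The step requiring the most care is this delta-function argument, since the identity $\int_{\mathbb R}e^{\frac{i}{b_1}(x_1-y_1)w_1}\,dw_1=2\pi|b_1|\,\delta(x_1-y_1)$ is only formal. I would make it rigorous by recognizing, through the factorisation above, that the inner $\mathbf w$-integral is precisely an inverse two-sided QFT applied to a QFT, so that Lemma \ref{inverse} applies after the substitution $\xi_s=(w_s-p_s)/b_s$; this also discharges the Fubini hypotheses under the standing assumption $f,\mathcal O_{A_1,A_2}^{i,j}[f]\in L^1\cap L^2$. A secondary point to check is the non-commutative bookkeeping, namely that no $i$-factor is ever commuted past a $j$-factor; this holds automatically because the left and right pairs depend on the disjoint variables $w_1$ and $w_2$, and the emergent deltas are real scalars that commute with $f(\mathbf y)$.
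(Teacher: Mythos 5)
The paper states this inversion lemma without any proof (it is imported from the QOLCT literature), so there is no internal argument to compare against; judged on its own merits, your proposal is correct in outline, and your rigorization route is precisely the device the paper itself uses elsewhere. Factoring each kernel into a chirp, a linear-phase factor $e^{-\frac{i}{b_1}x_1w_1}$, and a $w$-dependent outer phase, and then recognizing the $\mathbf w$-integral as an inverse two-sided QFT applied to a QFT after the substitution $\xi_s=(w_s-p_s)/b_s$, is the same chirp-factorisation the paper performs in Lemma \ref{QFT_STQOLCT} to reduce the ST-QOLCT to the QFT, with Lemma \ref{inverse} doing the analytic work. Your non-commutative bookkeeping is also right: the $w_1$-factors take values in $\mathbb{R}\oplus i\mathbb{R}$ and remain to the left of $f(\mathbf y)$, the $w_2$-factors take values in $\mathbb{R}\oplus j\mathbb{R}$ and remain to the right, and the emergent deltas are real scalars, so no $i$-factor ever crosses a $j$-factor.

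Two repairs are needed, both minor. First, your cancellation claim is slightly off: in the product $K_{A_1}^{-i}(x_1,w_1)K_{A_1}^{i}(y_1,w_1)$ the phases $d_1(w_1^2+p_1^2)$ and $-2w_1(d_1p_1-b_1q_1)$ do cancel, but the $p_1$ cross-terms do not, since one carries $x_1$ and the other $y_1$; they leave an extra residual factor $e^{-\frac{i}{b_1}(x_1-y_1)p_1}$ alongside the chirp $e^{\frac{ia_1}{2b_1}(y_1^2-x_1^2)}$. This is harmless — like the chirp, it equals $1$ on the support of $\delta(x_1-y_1)$ — but it must be listed among the surviving factors, or the claimed form of the $w_1$-integrand is wrong as written. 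Second, the lemma is asserted for all $f\in L^2(\mathbb{R}^2,\mathbb{H})$, whereas both your Fubini interchange and Lemma \ref{inverse} require integrability; your parenthetical ``standing assumption $f,\mathcal O_{A_1,A_2}^{i,j}[f]\in L^1\cap L^2$'' therefore establishes the formula only on a dense subclass, and you should close with the standard density argument, using the QOLCT Plancherel identity stated in the paper to pass to general $L^2$ signals (with the inversion then holding in the $L^2$ sense rather than pointwise). With these two touch-ups your proof is complete, and arguably more self-contained than the paper, which offers none.
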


\parindent=0mm \vspace{.1in}

 \begin{lemma}(Plancherel for QOLCT)  Every two dimensional quaternion valued function $f \in L^2(\mathbb R^2,\mathbb H)$ and its two-sided QOLCT are related to the Plancherel identity in the following way:
\begin{align}
\big\|\mathcal O_{A_1,A_2}^{i,j}\big[f\big]\big\|_{2}=\left\|f\right\|_{2}.\,
\end{align}
\end{lemma}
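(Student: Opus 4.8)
The plan is to reduce the QOLCT Plancherel identity to the already-established QFT Plancherel relation (Lemma \ref{Plancherel}) by factoring the two kernels. First I would observe that the exponent in each kernel is a real-linear combination of terms, each multiplied by a single imaginary unit ($i$ for $K_{A_1}^i$, $j$ for $K_{A_2}^j$); since all exponentials built from one fixed imaginary unit commute among themselves and with the scalar prefactors, I may split each exponential into a product of commuting factors and arrange them in any convenient order. Concretely I would write
\begin{equation*}
K_{A_1}^i(x_1,w_1)=C_1\,e^{i\phi_1(w_1)}\,e^{-\frac{i}{b_1}x_1w_1}\,\chi_1(x_1),\qquad
K_{A_2}^j(x_2,w_2)=\chi_2(x_2)\,e^{-\frac{j}{b_2}x_2w_2}\,e^{j\phi_2(w_2)}\,C_2,
\end{equation*}
where $C_s$ are the scalar constants with $|C_s|=1/\sqrt{2\pi|b_s|}$, the chirps $\chi_1(x_1)=e^{i(a_1x_1^2/2b_1+p_1x_1/b_1)}$ and $\chi_2(x_2)=e^{j(a_2x_2^2/2b_2+p_2x_2/b_2)}$ depend only on $\mathbf x$, and $e^{i\phi_1(w_1)},e^{j\phi_2(w_2)}$ collect the remaining $w$-dependent terms.

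Next, substituting this factorisation into the definition \eqref{dQOLCT} and pulling the $w$-only factors outside the integral (the factor $C_1e^{i\phi_1(w_1)}$ to the far left, the factor $e^{j\phi_2(w_2)}C_2$ to the far right), I would recognise the remaining integral, with $F(\mathbf x):=\chi_1(x_1)f(\mathbf x)\chi_2(x_2)$, as a scaled QFT:
\begin{equation*}
\mathcal O_{A_1,A_2}^{i,j}[f](\mathbf w)=C_1\,e^{i\phi_1(w_1)}\,\mathcal F^{i,j}[F]\!\left(\tfrac{w_1}{b_1},\tfrac{w_2}{b_2}\right)e^{j\phi_2(w_2)}\,C_2.
\end{equation*}
Then I would pass to moduli. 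Because $|pq|=|p||q|$ by \eqref{eqn 2.6} and the phase factors $e^{i\phi_1(w_1)},e^{j\phi_2(w_2)}$ are unit quaternions, taking the modulus removes the phases and leaves $|\mathcal O_{A_1,A_2}^{i,j}[f](\mathbf w)|=|C_1|\,|C_2|\,\big|\mathcal F^{i,j}[F](w_1/b_1,w_2/b_2)\big|$.

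Squaring, integrating over $\mathbf w$, and changing variables $u_s=w_s/b_s$ (Jacobian $|b_1b_2|$, the change of variables underlying the dilation property, Lemma \ref{dilation}) produces the factor $|C_1|^2|C_2|^2|b_1||b_2|=1/4\pi^2$, so that $\|\mathcal O_{A_1,A_2}^{i,j}[f]\|_2^2=\tfrac{1}{4\pi^2}\|\mathcal F^{i,j}[F]\|_2^2$. Finally I would invoke the QFT Plancherel relation (Lemma \ref{Plancherel}) to obtain $\|\mathcal F^{i,j}[F]\|_2^2=4\pi^2\|F\|_2^2$, and observe that since $\chi_1,\chi_2$ are unit-modulus pointwise we have $|F(\mathbf x)|=|f(\mathbf x)|$, hence $\|F\|_2=\|f\|_2$; combining these gives $\|\mathcal O_{A_1,A_2}^{i,j}[f]\|_2=\|f\|_2$.

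The one genuinely delicate point, and the step I expect to require the most care, is the kernel factorisation together with the bookkeeping of non-commutativity. The whole argument hinges on the fact that the $i$-factors can be freely reordered to place the $w$-dependent phase on the outer left and the chirp adjacent to $f$, and symmetrically for the $j$-factors on the right, so that $f$ remains sandwiched and the scaled QFT structure $e^{-iu_1x_1}\,F(\mathbf x)\,e^{-ju_2x_2}$ emerges cleanly. One must verify that no factor is inadvertently commuted past $f$ or past an exponential built from the other imaginary unit, for which no commutation rule is available; everything else in the argument is a routine application of multiplicativity of the modulus, a linear change of variables, and the QFT Plancherel identity.
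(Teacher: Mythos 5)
Your proposal is correct: the kernel factorization into single-unit exponentials, the reduction to a scaled two-sided QFT of $F(\mathbf x)=\chi_1(x_1)f(\mathbf x)\chi_2(x_2)$, the pointwise use of $|pq|=|p||q|$, the change of variables with Jacobian $|b_1b_2|$ cancelling $|C_1|^2|C_2|^2$ down to $1/4\pi^2$, and the final appeal to the QFT Plancherel identity all go through, and your non-commutativity bookkeeping is sound since every reordered factor is built from one fixed imaginary unit and nothing is moved past $f$ or past an exponential in the other unit. The paper states this lemma without proof, but your argument is in essence the paper's own technique: it is exactly the windowless version of the factorization carried out in Lemma \ref{QFT_STQOLCT} (take $\overline{\phi(\mathbf{x-u})}\equiv 1$ there) combined with Lemma \ref{Plancherel}, so you have reconstructed the intended route rather than diverged from it.
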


\parindent=0mm \vspace{.1in}

\section{\bf  Short-time Quaternion Offset Linear Canonical Transform(ST-QOLCT)}

In this section, we shall formally introduce the notion of the two-sided  Short-time quaternion offset linear canonical transform (ST-QOLCT)  then establish some properties of the proposed transform.

\parindent=0mm \vspace{.1in}
\begin{definition}\label{ST-QOLCT}{\bf(ST-QOLCT.)} Let $A_s =\left[\begin{array}{cccc}a_s & b_s &| & p_s\\c_s & d_s &| & q_s \\\end{array}\right] $, be a matrix parameter such that $a_s$, $b_s$, $c_s$, $d_s$, $p_s$, $q_s \in \mathbb R,\quad b_s\ne 0$ and $ a_sd_s-b_sc_s=1,$ for $s=1,2.$ The two-sided short-time quaternion offset linear canonical transform of any quaternion valued function $f\in L^2(\mathbb R^2,\mathbb H)$, with respect window function $\phi\in  L^2(\mathbb R^2,\mathbb H)$ is given by  \begin{equation}\label{eqn ST-QOLCT}
\mathcal{S}^{\mathbb H}_{\phi,A_1,A_2}\big[f\big](\mathbf{w,u})= \int_{\mathbb R^2} K_{A_1}^{\bf{i}}(x_1,w_1) \,f(\mathbf x)\overline{\phi(\mathbf{x-u})}\,K_{A_2}^{{j}}(x_2,w_2)dt
\end{equation}
where $\mathbf x=(x_{1},x_{2}),\, \mathbf w=(w_{1},w_{2}),\, \mathbf u= (u_1,u_2)$ and the quaternion kernels $K_{A_1}^i(x_1,w_1)$ and $K_{A_2}^j(x_2,w_2)$ are given by equation 2.23 and 2.24 respectively.
\end{definition}
\parindent=0mm \vspace{.1in}
{\bf Note:} It is worth to note that the quaternion ST-OLCT(\ref{ST-QOLCT}), boils down to various linear integral transforms such as:
\begin{itemize}
\item Short-time versions of quaternion linear canonical transform when matrices parameters $A_s =\left[\begin{array}{cccc}a_s & b_s &| & 0\\c_s & d_s &| & 0 \\\end{array}\right] $,\\

\item Quaternion short-time fractional Fourier transform when $A_s =\left[\begin{array}{cccc}\cos\theta & \sin\theta &| & 0\\-\sin\theta & \cos\theta &| & 0 \\\end{array}\right] $,\\

\item Quaternion short-time Fourier transform when $A_s =\left[\begin{array}{cccc}1 & 0 &| & 0\\0 & 1 &| & 0 \\\end{array}\right]$.

\end{itemize}
\parindent=0mm \vspace{.1in}

First of all we define the relation between ST-QOLCT and QOLCT, we begin as:\\
Since $b_s\neq 0,\, s=1,2,$ as in other cases proposed transform reduces to a chrip multiplications. Thus for fixed $\mathbf u$ we have

\begin{align}\label{IST1}
\mathcal S^{\mathbb H}_{\phi,A_1,A_2}\big[f\big](\mathbf{w,u})&=\mathcal O_{A_1,A_2}^{i,j}\big[f(\mathbf x)\overline{ \phi(\mathbf{x-u})}\big](\mathbf w)\\
&=\int_{\mathbb R^2} K_{A_1}^i(x_1,w_1) \,f(\mathbf x)\overline{\phi(\mathbf{x-u})}\,K_{A_2}^j(x_2,w_2)d\mathbf x
\end{align}

\parindent=0mm \vspace{.1in}

Applying the inverse QOLCT to (\ref{IST1}), we have\\
\begin{align}\label{IST2}
f\overline{\Theta}_{\mathbf u}(\mathbf x)=f(\mathbf x)\overline{\phi(\mathbf{x-u})}&=\{\mathcal O_{A_1,A_2}^{i,j}\}^{-1}\big[\mathcal{S}^{\mathbb H}_{\phi,A_1,A_2}\big[f\big](\mathbf{w,u}) \big](\mathbf x)\\
&=\int_{\mathbb R^2} K_{A_1}^{-i}(x_1,w_1) \mathcal{S}^{\mathbb H}_{\phi,A_1,A_2}\big[f\big](\mathbf {w,u})K_{A_2}^{-j}(x_2,w_2)d\mathbf w
\end{align}
where $f\overline{\Theta}_{\mathbf u}(\mathbf x)$ is known as modified signal.\\

Next we give the relation between two-sided ST-QOLCT and two-sided QFT, for that we have following lemma. It will be useful for our analysis of the ST-QOLCT.

\begin{lemma}\label{QFT_STQOLCT}
The two-sided ST-QOLCT(\ref{ST-QOLCT}) of a signal $f \in L^2({\mathbb R}^2, {\mathbb H})$\ can be reduced to the two-sided QFT(\ref{QFT}) as
\begin{align*}\label{lem qft}
\mathcal {S}^{\mathbb H}_{\phi,A_1,A_2}\big[f\big](w,u)= \frac{1}{\sqrt{2\pi {i  b}_1}}e^{i[-\frac{1}{b_1}{w_1(d}_1p_1-b_1s_1)+\frac{d_1}{2b_1}{(w}^2_1+{p }^2_1)]}\mathcal F^{i,j}(h)\left(\frac{\mathbf w}{\mathbf b},\mathbf u\right)\\
\times \frac{1}{\sqrt{2\pi {j  b}_2}}e^{j[-\frac{1}{b_2}{w_2(d}_2p_2-b_2s_2)+\frac{d_2}{2b_2}{(w}^2_2+{p }^2_2)]}
\end{align*}
where
\begin{equation}\label{function_h}
h(\mathbf{x,u})= e^{i[\frac{a_1}{2b_1}x^2_1+\frac{1}{b_1}x_1p_1]}f\overline{\Theta}_{\mathbf u}(\mathbf x)e^{j[\frac{a_2}{2b_2}x^2_2+\frac{1}{b_2}x_2p_2]}
\end{equation}
and  $\mathbf b=(b_1,b_2)$, $\mathcal F^{i,j}(h)$ is the QFT of signal $h$ given by (\ref{QFT}).
\end{lemma}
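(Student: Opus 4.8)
The plan is to prove Lemma~\ref{QFT_STQOLCT} by direct computation, starting from the definition of the ST-QOLCT in \eqref{eqn ST-QOLCT} and manipulating the kernel exponents until the integral takes the shape of a QFT. First I would substitute the explicit kernels $K_{A_1}^{i}(x_1,w_1)$ and $K_{A_2}^{j}(x_2,w_2)$ from \eqref{eqn k1} and \eqref{eqn k2} into the integrand. The guiding observation is that each kernel exponent is a quadratic in $x_s$ that can be split into three groups: a pure \emph{chirp} term in $x_s^2$ and a linear-in-$x_s$ offset term (both depending only on $x_s$, $a_s$, $b_s$, $p_s$), a pure \emph{Fourier-type} term of the form $-\tfrac{i}{b_1}x_1 w_1$ (respectively $-\tfrac{j}{b_2}x_2 w_2$), and a collection of terms depending only on $\mathbf w,\mathbf p,\mathbf q$ but not on $\mathbf x$. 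The $\mathbf x$-independent factors can be pulled outside the integral; these become exactly the two scalar prefactors $\tfrac{1}{\sqrt{2\pi i b_1}}e^{i[\cdots]}$ and $\tfrac{1}{\sqrt{2\pi j b_2}}e^{j[\cdots]}$ appearing in the statement.

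Next I would absorb the $\mathbf x$-dependent chirp and offset factors into the signal by \emph{defining} the modified function $h(\mathbf{x,u})$ exactly as in \eqref{function_h}, namely $h=e^{i[\frac{a_1}{2b_1}x_1^2+\frac{1}{b_1}x_1 p_1]}\,f\overline{\Theta}_{\mathbf u}(\mathbf x)\,e^{j[\frac{a_2}{2b_2}x_2^2+\frac{1}{b_2}x_2 p_2]}$, where $f\overline{\Theta}_{\mathbf u}(\mathbf x)=f(\mathbf x)\overline{\phi(\mathbf{x-u})}$ is the modified signal introduced in \eqref{IST2}. The point is that the left chirp/offset exponential carries the imaginary unit $i$ and multiplies $f\overline{\Theta}_{\mathbf u}$ on the \emph{left}, while the right one carries $j$ and multiplies on the \emph{right}, so they slot in precisely where the two-sided QFT in Definition~\ref{def 2.1} expects its left and right exponential factors. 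After this regrouping the remaining integral is $\int_{\mathbb R^2} e^{-\frac{i}{b_1}x_1 w_1}\,h(\mathbf{x,u})\,e^{-\frac{j}{b_2}x_2 w_2}\,d\mathbf x$, which by comparison with \eqref{QFT} is exactly $\mathcal F^{i,j}(h)$ evaluated at frequency $(w_1/b_1,\,w_2/b_1)=\mathbf w/\mathbf b$, with $\mathbf u$ carried along as a parameter.

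The step I expect to require the most care is the ordering and placement of the exponential factors, rather than any analytic subtlety. Because $\mathbb H$ is noncommutative, one cannot freely move a factor carrying $j$ past one carrying $i$, so I must be scrupulous about keeping the $i$-exponentials on the left of $f\overline{\Theta}_{\mathbf u}$ and the $j$-exponentials on the right throughout the whole manipulation; the cyclic/scalar symmetry \eqref{eqn 2.4} helps only for genuinely scalar (real) quantities, and I would use it only to relocate the $\mathbf x$-independent real combinations of $w_s,p_s,q_s$. The bookkeeping of matching the constant-in-$\mathbf x$ exponent against the claimed prefactor exponents $-\tfrac{1}{b_s}w_s(d_sp_s-b_s s_s)+\tfrac{d_s}{2b_s}(w_s^2+p_s^2)$ is then a routine completion-of-squares check, once one matches the paper's offset parameter convention. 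Finally I would note that the $b_s\neq 0$ hypothesis is exactly what makes the kernels take the Gaussian-chirp form used here, so no degenerate case arises, which closes the argument.
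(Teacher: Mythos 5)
Your proposal follows essentially the same route as the paper's own proof: substitute the explicit kernels \eqref{eqn k1}--\eqref{eqn k2} into \eqref{eqn ST-QOLCT}, pull the $\mathbf x$-independent factors out as the two quaternionic prefactors, absorb the left $i$-chirp and right $j$-chirp into the modified signal $f\overline{\Theta}_{\mathbf u}$ to form $h$ as in \eqref{function_h}, and recognize the remaining integral as $\mathcal F^{i,j}(h)\left(\frac{\mathbf w}{\mathbf b},\mathbf u\right)$, with the same (correct) attention to keeping $i$-factors on the left and $j$-factors on the right. The only blemish is the typo $(w_1/b_1,\,w_2/b_1)$, which should read $(w_1/b_1,\,w_2/b_2)$, consistent with your own notation $\mathbf w/\mathbf b$.
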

\begin{proof}
From the definition of the ST-QOLCT, we have
$$\begin{array}{lcr}
\mathcal S^{\mathbb H}_{\phi,A_1,A_2}\big[f\big](\mathbf{w,u})&&\\\\
\qquad=\displaystyle\int_{\mathbb R^2} K_{A_1}^{{i}}(x_1,w_1) f(x)\overline{\phi(x-u)}\,K_{A_2}^{{j}}(x_2,w_2)dx&&\\\\
\qquad=\displaystyle\int_{\mathbb R^2}\dfrac{1}{\sqrt{2\pi b_1{i}}} e^{\frac{{i}}{2b_1}\big[a_1x_1^2-2x_1(w_1-p_1)-2w_1(d_1p_1-b_1q_1)+d_1(w_1^2+p_1^2)\big]}f(x)\overline{\phi(x-u)}&&\\\\
\qquad\qquad\qquad\times e^{\frac{{j}}{2b_2}\big[a_2x_2^2-2x_2(w_2-p_2)-2w_2(d_2p_2-b_2q_2)+d_2(w_2^2+p_2^2)\big]}dx&&\\\\
\qquad=\frac{1}{\sqrt{2\pi {i  b}_1}}e^{i[-\frac{1}{b_1}{w_1(d}_1p_1-b_1s_1)+\frac{d_1}{2b_1}{(w}^2_1+{p }^2_1)]}\displaystyle\int_{\mathbb R^2}e^{-i\frac{1}{b_1}x_1w_1}\left(e^{i[\frac{a_1}{2b_1}x^2_1+\frac{1}{b_1}x_1p_1]}f(x)\overline{\phi(\mathbf {x-u})}\right.&&\\\\
\qquad\qquad\qquad\times \left.e^{j[\frac{a_2}{2b_2}x^2_2+\frac{1}{b_2}x_2p_2]}\right)e^{-j\frac{1}{b_2}x_2w_2}dx\frac{1}{\sqrt{2\pi {j  b}_2}}e^{j[-\frac{1}{b_2}{w_2(d}_2p_2-b_2s_2)+\frac{d_2}{2b_2}{(w}^2_2+{p }^2_2)]}\\\\
\end{array}$$

on setting
$
h(\mathbf{x,u})= e^{i[\frac{a_1}{2b_1}x^2_1+\frac{1}{b_1}x_1p_1]}f\overline{\Theta}_{\mathbf u}(\mathbf x)e^{j[\frac{a_2}{2b_2}x^2_2+\frac{1}{b_2}x_2p_2]}$
in the above equationwe get the desired result.
\begin{align*}
\mathbb S^{\mathbb H}_{\phi,A_1,A_2}\big[f\big](\mathbf{w,u})= \frac{1}{\sqrt{2\pi {i  b}_1}}e^{i[-\frac{1}{b_1}{w_1(d}_1p_1-b_1s_1)+\frac{d_1}{2b_1}{(w}^2_1+{p }^2_1)]}\mathcal F^{i,j}(h)\left(\frac{\mathbf w}{\mathbf b},\mathbf u\right)\\
\times\frac{1}{\sqrt{2\pi {j  b}_2}}e^{j[-\frac{1}{b_2}{w_2(d}_2p_2-b_2s_2)+\frac{d_2}{2b_2}{(w}^2_2+{p }^2_2)]}
\end{align*}
\end{proof}
\subsection{Some properties of ST-QOLCT}

\begin{theorem}\label{linear bounded}
Let $f,\phi\in L^2({\mathbb R}^2,{\mathbb H})$. Then its ST-QOLCT satisfies:

 (i)The map $f \longrightarrow\mathcal{S}^{\mathbb H}_{\phi,A_1,A_2}[f]$ \ is real linear.\\
 (ii)$\mathcal S^{\mathbb H}_{\phi,A_1,A_2}[f]$ is is uniformly continuous and bounded on the time–frequency plane $\mathbb R^2\times\mathbb R^2$ and satisfies :
 \begin{align}
		\begin{split}
		|\mathcal{S}^{\mathbb H}_{\phi,A_1,A_2}\big[f\big](\mathbf{w,u})|\leq \frac{1}{2\pi\sqrt{|b_{1}b_{2}|}} \|f\|_{L^{2}\mathbb({R}^2,H)} \|\phi\|_{L^{2}(\mathbb{R}^2,\mathbb H)}\\
\end{split}
	\end{align}
\begin{proof}
The proof of (i) follows  by definition(\ref{ST-QOLCT}) and (ii) is proved in (\cite{OWN1}, Thm 1).
\end{proof}
\end{theorem}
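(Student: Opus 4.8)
The plan is to treat the three assertions separately, since the pointwise bound and the uniform continuity in (ii) call for different tools; throughout write $\mathcal S[f]:=\mathcal S^{\mathbb H}_{\phi,A_1,A_2}[f]$ for brevity. For part (i), real linearity is immediate from Definition \ref{ST-QOLCT}: for $f,g\in L^2(\mathbb R^2,\mathbb H)$ and real scalars $\lambda,\mu$, the reals commute with the quaternion kernels $K_{A_1}^{i}$, $K_{A_2}^{j}$ and with $\overline{\phi(\mathbf{x-u})}$, so they factor out of the integrand, and linearity of the integral gives $\mathcal S[\lambda f+\mu g]=\lambda\,\mathcal S[f]+\mu\,\mathcal S[g]$. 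I would emphasise that only \emph{real} linearity holds, because a quaternionic scalar placed to the left of $f$ need not commute past the left kernel $K_{A_1}^{i}$.

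For the bound in (ii), I first record that, since $|i|=|j|=1$ and the exponents in (2.23)--(2.24) are an imaginary unit times a real number, $|K_{A_1}^{i}(x_1,w_1)|=(2\pi|b_1|)^{-1/2}$ and $|K_{A_2}^{j}(x_2,w_2)|=(2\pi|b_2|)^{-1/2}$, independently of $\mathbf x,\mathbf w$. Using the triangle inequality for quaternion-valued integrals together with $|qp|=|q||p|$ from (2.6) and $|\overline\phi|=|\phi|$,
\begin{align*}
\big|\mathcal S[f](\mathbf{w,u})\big|\le \frac{1}{2\pi\sqrt{|b_1b_2|}}\int_{\mathbb R^2}|f(\mathbf x)|\,|\phi(\mathbf{x-u})|\,d\mathbf x.
\end{align*}
Applying the Cauchy--Schwarz inequality (Lemma \ref{lem 2.1}) to $|f|$ and $|\phi(\cdot-\mathbf u)|$ and then translation invariance, $\|\phi(\cdot-\mathbf u)\|_2=\|\phi\|_2$, delivers the stated estimate. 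Since the right-hand side is independent of $(\mathbf w,\mathbf u)$, boundedness on $\mathbb R^2\times\mathbb R^2$ follows at once.

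Uniform continuity in (ii) is the step I expect to be the main obstacle. I would split an increment as
\begin{align*}
\big|\mathcal S[f](\mathbf{w',u'})-\mathcal S[f](\mathbf{w,u})\big|\le\big|\mathcal S[f](\mathbf{w',u'})-\mathcal S[f](\mathbf{w',u})\big|+\big|\mathcal S[f](\mathbf{w',u})-\mathcal S[f](\mathbf{w,u})\big|.
\end{align*}
For the $\mathbf u$-increment the kernel bounds and Cauchy--Schwarz reduce the first term to $\tfrac{1}{2\pi\sqrt{|b_1b_2|}}\|f\|_2\,\|\phi(\cdot-\mathbf u')-\phi(\cdot-\mathbf u)\|_2$, which vanishes as $\mathbf u'\to\mathbf u$ by strong continuity of translation in $L^2$, uniformly in $\mathbf w'$. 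For the $\mathbf w$-increment I would invoke Lemma \ref{QFT_STQOLCT} to write $\mathcal S[f]$ as a unimodular factor (smooth in $\mathbf w$) times $\mathcal F^{i,j}(h)(\mathbf w/\mathbf b,\mathbf u)$, where $h(\cdot,\mathbf u)\in L^1$ with $\|h(\cdot,\mathbf u)\|_1\le\|f\|_2\|\phi\|_2$. Telescoping the two exponentials of the two-sided QFT and using $|e^{i\theta}-e^{i\theta'}|\le\min(2,|\theta-\theta'|)$, I would split the resulting integral over $\{|\mathbf x|\le R\}$ and $\{|\mathbf x|>R\}$: the tail is dominated by $(\int_{|\mathbf x|>R}|f|^2)^{1/2}\|\phi\|_2$, small for large $R$ \emph{uniformly} in $\mathbf u$, while on the bounded region the contribution is $O(R\,|\mathbf w-\mathbf w'|)$. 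Choosing $R$ large and then $|\mathbf w-\mathbf w'|$ small makes this uniformly small, and combining the two increments yields uniform continuity on $\mathbb R^2\times\mathbb R^2$. The delicate point throughout is keeping every estimate uniform in the parameter $\mathbf u$, which is precisely why the $L^2$-tail control of $f$ (rather than a bound depending on the translate of $\phi$) is the right device.
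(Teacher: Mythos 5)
Your part (i) and the pointwise bound in (ii) are correct and complete: the kernel moduli are indeed $(2\pi|b_s|)^{-1/2}$, and the triangle inequality, multiplicativity of the quaternion modulus, Cauchy--Schwarz and translation invariance of the $L^2$-norm give exactly the stated constant $\frac{1}{2\pi\sqrt{|b_1b_2|}}$. Note that the paper itself proves neither assertion of (ii): it simply cites (\cite{OWN1}, Thm.\ 1), so your attempt must stand on its own. Your $\mathbf u$-increment estimate, which reduces matters to $\|\phi(\cdot-\mathbf u')-\phi(\cdot-\mathbf u)\|_2$ and invokes strong continuity of translations in $L^2$, is also fine and, as you observe, uniform in $\mathbf w'$.

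The gap is in the $\mathbf w$-increment. Lemma \ref{QFT_STQOLCT} writes $\mathcal S^{\mathbb H}_{\phi,A_1,A_2}[f](\mathbf{w,u})=P(\mathbf w)\,G(\mathbf{w,u})\,Q(\mathbf w)$ with $G(\mathbf{w,u}):=\mathcal F^{i,j}(h)\left(\frac{\mathbf w}{\mathbf b},\mathbf u\right)$ and unimodular prefactors $P,Q$ whose phases contain the quadratic terms $\frac{d_1}{2b_1}w_1^2$ and $\frac{d_2}{2b_2}w_2^2$. Your telescoping of the two QFT exponentials controls only the increment of the middle factor $G$; the differences $|P(\mathbf w')-P(\mathbf w)|$ and $|Q(\mathbf w')-Q(\mathbf w)|$ contribute terms of size comparable to $|d_s|\,|w_s-w_s'|\,|w_s+w_s'|/(2|b_s|)$, which are \emph{not} small uniformly in $\mathbf w$ when $d_s\neq 0$: the chirp $w\mapsto e^{\,i d w^2}$ is the textbook example of a bounded smooth function that fails to be uniformly continuous on $\mathbb R$. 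So describing the prefactor as ``smooth in $\mathbf w$'' does not close the argument, and your bounded-region estimate $O(R\,|\mathbf w-\mathbf w'|)$ never touches these terms. The missing ingredient is uniform-in-$\mathbf u$ decay of $G$ at infinity. This is available from your own estimates: $\|h(\cdot,\mathbf u)\|_{1}\le \big(|f|\ast|\tilde\phi|\big)(\mathbf u)$ with $\tilde\phi(\mathbf x)=\phi(-\mathbf x)$, the map $\mathbf u\mapsto h(\cdot,\mathbf u)$ is norm-continuous into $L^1$ and tends to $0$ as $|\mathbf u|\to\infty$ (a convolution of two $L^2$ functions lies in $C_0$), so the family $\{h(\cdot,\mathbf u)\}\cup\{0\}$ is compact in $L^1$ and the Riemann--Lebesgue decay of the QFT holds uniformly over it. One then splits: on $|\mathbf w|\le W$ the chirps are uniformly continuous on a compact set and your telescoping handles $G$; on $|\mathbf w|\ge W$ one uses $\sup_{\mathbf u}|G(\mathbf{w,u})|<\epsilon$ so that both function values are already small. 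With that insertion your scheme becomes a complete proof; without it, the claim of uniform continuity on all of $\mathbb R^2\times\mathbb R^2$ is unproved (and the step as written genuinely fails whenever $d_1$ or $d_2$ is nonzero).
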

\begin{theorem}{\bf(Moyal’s formula).}\label{myl} Let $\phi,\psi \in L^2(\mathbb{R}^2,\mathbb{H})$ be a fixed non-zero window functions and $f,g \in L^2\left( \mathbb{R}^2,\mathbb{H}\right)$
then
\begin{equation}
\left\langle \mathcal S^{\mathbb H}_{\phi,A_1,A_2}[f](\mathbf{w,u}), {\mathcal S^{\mathbb H}_{\psi,A_1,A_2}[g](\mathbf{w,u})} \right\rangle
=[\left\langle f,g\right\rangle\left\langle \phi ,\psi\right\rangle]
\end{equation}
\begin{proof}
The proof is already present in \cite{OWN1}.
\end{proof}
\end{theorem}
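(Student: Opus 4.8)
The plan is to exploit the factorization \eqref{IST1}, which exhibits the ST-QOLCT as an ordinary QOLCT of the windowed signal, and then to collapse the frequency variable by the Plancherel identity for the QOLCT. Writing $\Theta_{\mathbf u}(\mathbf x)=\phi(\mathbf{x-u})$ and $\Psi_{\mathbf u}(\mathbf x)=\psi(\mathbf{x-u})$, I would first unfold the time--frequency inner product as the iterated integral
\begin{align*}
\big\langle \mathcal S^{\mathbb H}_{\phi,A_1,A_2}[f],\, \mathcal S^{\mathbb H}_{\psi,A_1,A_2}[g]\big\rangle
= \int_{\mathbb R^2}\!\!\int_{\mathbb R^2} \mathcal O_{A_1,A_2}^{i,j}\big[f\overline{\Theta}_{\mathbf u}\big](\mathbf w)\,\overline{\mathcal O_{A_1,A_2}^{i,j}\big[g\overline{\Psi}_{\mathbf u}\big](\mathbf w)}\,d\mathbf w\,d\mathbf u .
\end{align*}
The advantage of routing through the QOLCT (rather than through the QFT relation of Lemma \ref{QFT_STQOLCT}) is that the QOLCT Plancherel lemma carries no stray $4\pi^2$ constant and the unit-modulus chirp prefactors never appear.

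For each fixed $\mathbf u$, the inner $\mathbf w$-integral is an inner product of two QOLCTs, so I would invoke the polarized form of the QOLCT Plancherel lemma --- obtained from the stated norm identity by the usual polarization argument --- to replace it by the inner product of the windowed signals themselves:
\begin{align*}
\int_{\mathbb R^2} \mathcal O_{A_1,A_2}^{i,j}\big[f\overline{\Theta}_{\mathbf u}\big](\mathbf w)\,\overline{\mathcal O_{A_1,A_2}^{i,j}\big[g\overline{\Psi}_{\mathbf u}\big](\mathbf w)}\,d\mathbf w
= \int_{\mathbb R^2} f(\mathbf x)\,\overline{\phi(\mathbf{x-u})}\,\psi(\mathbf{x-u})\,\overline{g(\mathbf x)}\,d\mathbf x ,
\end{align*}
the order $\overline{\phi(\mathbf{x-u})}\,\psi(\mathbf{x-u})$ arising from $\overline{g\,\overline{\psi}}=\psi\,\overline g$. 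It then remains to integrate over $\mathbf u$: applying Fubini's theorem to interchange the $\mathbf x$- and $\mathbf u$-integrations and substituting $\mathbf y=\mathbf{x-u}$ in the inner integral, the window contribution $\int_{\mathbb R^2}\overline{\phi(\mathbf y)}\,\psi(\mathbf y)\,d\mathbf y$ decouples from the $\mathbf x$-variable, and I would identify the surviving factors with $\langle f,g\rangle$ and the window inner product to recover $[\langle f,g\rangle\langle\phi,\psi\rangle]$.

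The step I expect to be the genuine obstacle is precisely this final decoupling. Because $\mathbb H$ is non-commutative, after swapping integrals the window integral sits \emph{sandwiched} between $f(\mathbf x)$ and $\overline{g(\mathbf x)}$,
\begin{align*}
\int_{\mathbb R^2} f(\mathbf x)\left(\int_{\mathbb R^2}\overline{\phi(\mathbf y)}\,\psi(\mathbf y)\,d\mathbf y\right)\overline{g(\mathbf x)}\,d\mathbf x ,
\end{align*}
and a general quaternion cannot be pulled through $f(\mathbf x)$ as a scalar multiplier, so the factorization into a product of two independent inner products is transparent only when that window integral is central (e.g.\ when it is real, as for $\phi=\psi$). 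In the fully quaternionic case one must pass to the real scalar part and invoke the cyclic symmetry \eqref{eqn 2.4} of $[\,\cdot\,]_0$ to justify regrouping the factors into $\langle f,g\rangle\langle\phi,\psi\rangle$ --- which is, I believe, exactly what the bracket $[\,\cdot\,]$ on the right-hand side of the statement is meant to encode. Keeping strict track of the left/right placement of the quaternion factors at each stage, and verifying that the polarized Plancherel identity is itself valid for the two-sided kernel, is where the real care must be taken.
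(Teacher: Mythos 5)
You should know at the outset that the paper itself contains no proof to compare against: Theorem \ref{myl} is justified there only by the citation to \cite{OWN1}. Your outline --- the factorization \eqref{IST1} exhibiting $\mathcal S^{\mathbb H}_{\phi,A_1,A_2}[f](\mathbf{w,u})$ as $\mathcal O^{i,j}_{A_1,A_2}[f\overline{\Theta}_{\mathbf u}](\mathbf w)$, a polarized Plancherel applied at each fixed $\mathbf u$, then Fubini and the substitution $\mathbf y=\mathbf{x-u}$ --- is indeed the standard route, and you are right on two counts of caution: real polarization of the stated norm identity yields only the scalar-part Parseval $[\langle \mathcal O F,\mathcal O G\rangle]_0=[\langle F,G\rangle]_0$ (quaternionic polarization fails for the two-sided kernel, since multiplication by $i$ or $j$ is not a scalar for a transform whose kernels sandwich the signal), and the sandwiched window integral is exactly where the difficulty sits. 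A minor additional point you should also address: applying Plancherel at fixed $\mathbf u$ requires $f\,\overline{\phi(\cdot-\mathbf u)}\in L^2$, which does not follow from $f,\phi\in L^2$ alone; one needs $\phi\in L^2\cap L^\infty$ or a density argument.

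The genuine gap is your final claim that the scalar part plus the cyclic symmetry \eqref{eqn 2.4} ``justify regrouping the factors into $\langle f,g\rangle\langle\phi,\psi\rangle$.'' Cyclicity gives $[f(\mathbf x)\overline{\phi(\mathbf y)}\psi(\mathbf y)\overline{g(\mathbf x)}]_0=[\overline{\phi(\mathbf y)}\psi(\mathbf y)\overline{g(\mathbf x)}f(\mathbf x)]_0$, so after integrating your chain terminates at
\begin{equation*}
\bigl[\langle \mathcal S^{\mathbb H}_{\phi,A_1,A_2}[f],\mathcal S^{\mathbb H}_{\psi,A_1,A_2}[g]\rangle\bigr]_0
=\Bigl[\Bigl(\int_{\mathbb R^2}\overline{\phi(\mathbf y)}\,\psi(\mathbf y)\,d\mathbf y\Bigr)\Bigl(\int_{\mathbb R^2}\overline{g(\mathbf x)}\,f(\mathbf x)\,d\mathbf x\Bigr)\Bigr]_0,
\end{equation*}
and this ordered product is \emph{not} $[\langle f,g\rangle\langle\phi,\psi\rangle]_0$ in general: no cyclic permutation converts $f\,\overline{\phi}\,\psi\,\overline{g}$ into $f\,\overline{g}\,\phi\,\overline{\psi}$, because cyclicity cannot transpose adjacent factors. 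Concretely, take $f=iF_0$, $g=jF_0$, $\phi=i\Phi_0$, $\psi=j\Phi_0$ with $F_0,\Phi_0$ real-valued; then $\int\overline{\phi}\psi=(-ij)\|\Phi_0\|_2^2=-k\|\Phi_0\|_2^2$ and $\int\overline{g}f=(-ji)\|F_0\|_2^2=k\|F_0\|_2^2$, so the displayed quantity equals $+\|F_0\|_2^2\|\Phi_0\|_2^2$, whereas $\langle f,g\rangle\langle\phi,\psi\rangle=(-k)(-k)\|F_0\|_2^2\|\Phi_0\|_2^2=-\|F_0\|_2^2\|\Phi_0\|_2^2$: the two sides differ in sign. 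So the theorem as printed is only valid under extra hypotheses --- e.g.\ $\int\overline{\phi}\psi$ real or central, in particular $\phi=\psi$ (which recovers \eqref{a} and the energy relation \eqref{c}), or real-valued windows --- or else the bracket on the right must be read as the specific ordered quantity your computation actually produces. Your instinct that this decoupling is ``the genuine obstacle'' was exactly right; the error is in believing cyclic symmetry closes it.
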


Consequences of theorem \ref{myl}.

\parindent=8mm \vspace{.2in}

(i) If $\phi=\psi$, then
\begin{equation}\label{a}
\langle \mathcal S^{\mathbb H}_{\phi,A_1,A_2}[f](\mathbf{w,u}), {\mathcal S^{\mathbb H}_{\phi,A_1,A_2}[f_2](\mathbf{w,u})} \rangle
=\|\phi\|^2_{L^2(\mathbb R^2,\mathbb H)}\langle f,g\rangle\end{equation}

(ii) If $f=g$, then
\begin{equation}\label{b}
\langle \mathcal S^{\mathbb H}_{\phi,A_1,A_2}[f](\mathbf{w,u}), {\mathcal S^{\mathbb H}_{\psi,A_1,A_2}[f](\mathbf{w,u})} \rangle
=\langle{\phi,\psi}\rangle\|f\|^{2}_{L^2(\mathbb{R}^2,\mathbb H)}
\end{equation}
(iii) If  $f=g$ and $\phi=\psi$, then
\begin{equation}\label{c}
\langle \mathcal S^{\mathbb H}_{\phi,A_1,A_2}[f](\mathbf{w,u}), {\mathcal S^{\mathbb H}_{\phi,A_1,A_2}[f](\mathbf{w,u})} \rangle=
\int_{\mathbb{R}^2}\int_{\mathbb{R}^2}| \mathcal S^{\mathbb H}_{\phi,A_1,A_2}[f](\mathbf{w,u})|^{2}dudw
=\|\phi\|^2_{L^2(\mathbb R^2,\mathbb H)}\|f\|^{2}_{L^2(\mathbb{R}^2,\mathbb H)} \end{equation}

Note (\ref{c}) is known as the energy preserving relation for the proposed
ST-QOLCT.
\parindent=8mm \vspace{.2in}

\begin{remark}{\bf(Isometry )}. For  $\|\phi\|^2_{L^2(\mathbb R^2,\mathbb H)}=1 $  (\ref{c}) reduces to
\begin{equation}\int_{\mathbb{R}^2}\int_{\mathbb{R}^2}|\mathcal S_\phi^{A_1,A_2}[f](\mathbf{w,u})|^2 dudw =\|f\|^2_{L^{2}(\mathbb{R}^2,\mathbb{H})}\end{equation}
i.e the proposed ST-QOLCT(\ref{ST-QOLCT}) becomes an
isometry from ${L^{2}(\mathbb{R}^2,\mathbb{H})}$ into ${L^{2}(\mathbb{R}^2,\mathbb{H})}$.In other words, the total energy of a quaternion-valued signal computed in the
in the quaternion short-time offset linear canonical  domain is equal to the
total energy computed in the spatial domain.
\end{remark}
\begin{theorem}\label{reconstruction}{\bf(Reconstruction formula).} Every 2D quaternion signal $f\in L^{2}(\mathbb{R}^2,\mathbb{H})$
 can be fully reconstructed by the formula
\begin{equation}
f(\mathbf x)=\frac{1}{\|\phi\|^2} \int_{\mathbb R^2}\int_{\mathbb R^2} K_{A_1}^{-i}(x_1,w_1)\mathcal S^{\mathbb H}_{\phi, A_1,A_2} [f](\mathbf{w,u}){ K}^{-j}_{A_2}(x_2,w_2) \phi(\mathbf{x-u})d\mathbf w d\mathbf u.
\end{equation}
\begin{proof} Already proved in \cite{OWN1}\end{proof}
\end{theorem}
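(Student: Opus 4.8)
The plan is to obtain the reconstruction formula as a direct consequence of the QOLCT inversion identity (\ref{IST2}), which already expresses the windowed signal $f(\mathbf x)\overline{\phi(\mathbf{x-u})}$ in terms of the ST-QOLCT for each fixed translate $\mathbf u$. First I would record, for every $\mathbf u\in\mathbb R^2$,
\begin{equation*}
f(\mathbf x)\,\overline{\phi(\mathbf{x-u})}=\int_{\mathbb R^2} K_{A_1}^{-i}(x_1,w_1)\,\mathcal S^{\mathbb H}_{\phi,A_1,A_2}\big[f\big](\mathbf{w,u})\,K_{A_2}^{-j}(x_2,w_2)\,d\mathbf w,
\end{equation*}
which holds because for fixed $\mathbf u$ the ST-QOLCT coincides with the QOLCT of $f\,\overline{\phi(\mathbf{\cdot-u})}$, an invertible transform.

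The decisive algebraic move is to right-multiply both sides by $\phi(\mathbf{x-u})$. Since $\overline q\,q=|q|^2\in\mathbb R$ for every $q\in\mathbb H$, the two window factors collapse on the left to the real scalar $|\phi(\mathbf{x-u})|^2$, giving $f(\mathbf x)\,|\phi(\mathbf{x-u})|^2$; on the right, because $\phi(\mathbf{x-u})$ does not depend on $\mathbf w$, it may be carried inside the integral and appended to the right of $K_{A_2}^{-j}$. This is the only step where the non-commutativity of $\mathbb H$ must be watched: the argument works precisely because the conjugated window sits on the right in (\ref{IST2}), so that $\overline\phi$ and $\phi$ become adjacent and produce a central (real) factor that can afterward be pulled out of $f(\mathbf x)$.

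Next I would integrate both sides over $\mathbf u\in\mathbb R^2$. The left-hand side becomes, by translation invariance of Lebesgue measure and the substitution $\mathbf v=\mathbf{x-u}$,
\begin{equation*}
\int_{\mathbb R^2} f(\mathbf x)\,|\phi(\mathbf{x-u})|^2\,d\mathbf u = f(\mathbf x)\int_{\mathbb R^2}|\phi(\mathbf v)|^2\,d\mathbf v=\|\phi\|^2\,f(\mathbf x),
\end{equation*}
while the right-hand side is exactly the iterated integral appearing in the statement, with integration in $\mathbf w$ followed by integration in $\mathbf u$. Dividing through by $\|\phi\|^2\ne 0$ then produces the claimed formula.

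The main obstacle is measure-theoretic rather than algebraic: I must justify carrying the factor $\phi(\mathbf{x-u})$ inside the inner integral and interchanging the limiting operations implicit in the two integrations. This can be secured by arguing first for $f,\phi$ in a dense subclass (say $L^1\cap L^2$ with integrable transform), where the inner $\mathbf w$-integral converges absolutely thanks to the uniform kernel bound $|K_{A_s}^{-i}|=1/\sqrt{2\pi|b_s|}$ together with the square-integrability of $\mathcal S^{\mathbb H}_{\phi,A_1,A_2}[f]$ guaranteed by the energy identity (\ref{c}); the general $L^2$ case then follows by a standard density and continuity argument using the boundedness in Theorem \ref{linear bounded}. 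Once convergence is in hand, everything else is routine quaternion bookkeeping.
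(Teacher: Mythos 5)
Your proposal is correct, but note that the paper itself contains no argument to compare against: its ``proof'' is only the citation to \cite{OWN1}. What you have written out---fix $\mathbf u$, invert the QOLCT of the windowed signal via (\ref{IST2}), right-multiply by $\phi(\mathbf{x-u})$ so that $\overline{\phi(\mathbf{x-u})}\,\phi(\mathbf{x-u})=|\phi(\mathbf{x-u})|^2$ collapses to a real, hence central, scalar, then integrate over $\mathbf u$ and use translation invariance of Lebesgue measure to produce $\|\phi\|^2 f(\mathbf x)$---is the standard reconstruction argument for windowed transforms, and it is precisely what the normalization $1/\|\phi\|^2$ in the statement calls for. You also correctly identify the one genuinely quaternionic subtlety: the computation works only because the conjugated window sits on the \emph{right} of $f$ in (\ref{IST2}), so that $\overline{\phi}$ and $\phi$ become adjacent, and because the recovered factor $\phi(\mathbf{x-u})$, being independent of $\mathbf w$, can be appended to the right of $K_{A_2}^{-j}$ inside the $\mathbf w$-integral---matching the order of factors in the stated formula verbatim.

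Two refinements. First, your convergence remark is slightly too quick as stated: square-integrability of $\mathcal S^{\mathbb H}_{\phi,A_1,A_2}[f]$ jointly in $(\mathbf w,\mathbf u)$, from (\ref{c}), does not give $L^1$ in $\mathbf w$ for fixed $\mathbf u$, so the absolute convergence of the inner integral really does require the dense subclass (e.g.\ $f\overline{\Theta}_{\mathbf u}\in L^1\cap L^2$ with integrable QOLCT) before extending to general $f\in L^2(\mathbb R^2,\mathbb H)$ by the continuity supplied by Theorem \ref{linear bounded}; you indicate this, so the plan is sound, but the density step is doing real work and is not optional. Second, an alternative already available inside this paper avoids pointwise convergence questions entirely: read the reconstruction weakly from Moyal's formula (Theorem \ref{myl} with $\psi=\phi$, i.e.\ (\ref{a})), which gives $\|\phi\|^2\langle f,g\rangle=\langle\mathcal S^{\mathbb H}_{\phi,A_1,A_2}[f],\mathcal S^{\mathbb H}_{\phi,A_1,A_2}[g]\rangle$ for all $g$; expanding $\overline{\mathcal S^{\mathbb H}_{\phi,A_1,A_2}[g]}$ from Definition \ref{ST-QOLCT} and invoking the arbitrariness of $g$ yields the stated identity in the weak sense, with the pointwise version recovered a.e. Either route establishes the theorem, under the standing hypothesis $\|\phi\|\neq 0$, which you should state explicitly since the formula divides by it.
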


\section{\bf Uncertainty principles for the QWLCT}
In this section we study several different kinds of uncertainty principles associated with ST-QOLCT.
 \subsection{Donoho-Stark's uncertainty principle}
In this subsection, according to the relationship between the ST-QOLCT and the QFT, we present a exquisite  uncertainty principle on $\mathbb R^2$ concerning to the Donoho-Stark's uncertainty principle. First we revisit the concept of $\epsilon-concentrate$ of a quaternion valued signal  on a measurable set $M\subseteq\mathbb R^2,$. Let us begin with the following definition.
\begin{definition}\label{econ}
For $\epsilon\ge 0,$ a quaternion valued signal $f\in L^2(\mathbb R^2,\mathbb H)  $ is said to be $\epsilon-concentrated$ on a measurable set $M\subseteq\mathbb R^2,$ if
\begin{equation}\label{eqn econ}
\left(\int_{\mathbb R^2\setminus M}|f(\mathbf x)|^2d\mathbf x\right)^{\frac{1}{2}}\le\epsilon\|f\|_2
\end{equation}
If  $0\le\epsilon\le\frac{1}{2},$ then the most of energy is concentrated on $M$, and  $M$ is indeed the essential support of $f$, if $\epsilon= 0,$ then $D$
is the exact support of $f$.
Similarly, we say that its $\mathcal F^{i,j}$ is $\epsilon-concentrated$ on a measurable set $N\subseteq \mathbb R^2,$ if
\begin{equation}\label{eqn econft}
\left(\int_{\mathbb R^2 \setminus N}|\mathcal F^{i,j}[f(\mathbf x)](\mathbf w)|^2d\mathbf w\right)^{\frac{1}{2}}\le\epsilon\|\mathcal F^{i,j}[f]\|_2
\end{equation}
\end{definition}
\begin{lemma}{\bf(Donoho-Stark's uncertainty principle for QFT\cite{32})}
\label{DONOqft}
Let $f\in L^2(\mathbb{R}^2,\mathbb{H})$ with $f\neq0$ is $\epsilon_{M}-$concentrated on $M \subseteq\mathbb{R}^2$,
and $F_{Q}(f)$ is $\epsilon_{N}-$concentrated on $N \subseteq\mathbb{R}^2$. Then
\begin{align}
		\begin{split}
|M||N|\geq2\pi (1-\epsilon_{M}-\epsilon_{N})^{2}.
		\end{split}
	\end{align}
where $|M|$ and $|N|$ are the measures of the sets $M$ and $N$.
\end{lemma}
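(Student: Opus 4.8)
The plan is to adapt the classical Donoho--Stark argument to the two-sided QFT by introducing a \emph{time-limiting} operator and a \emph{band-limiting} operator and then comparing the energy that survives their composition with the energy of $f$. Concretely, I would set $P_M f = \chi_M f$ (multiplication by the indicator of $M$) and $Q_N f = (\mathcal F^{i,j})^{-1}[\chi_N\,\mathcal F^{i,j}[f]]$, using the inverse QFT of Lemma \ref{inverse}. The concentration hypothesis (\ref{eqn econ}) on $M$ says precisely that $\|f-P_M f\|_2\le \epsilon_M\|f\|_2$, and, after invoking the QFT Plancherel identity (Lemma \ref{Plancherel}) to pass between $f$ and $\mathcal F^{i,j}[f]$, the hypothesis (\ref{eqn econft}) on $N$ becomes $\|f-Q_N f\|_2\le \epsilon_N\|f\|_2$. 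Note also that $Q_N$ is a contraction, $\|Q_N f\|_2\le\|f\|_2$, again by Plancherel together with $\|\chi_N\,\mathcal F^{i,j}[f]\|_2\le\|\mathcal F^{i,j}[f]\|_2$.

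First I would establish a lower bound for $\|Q_N P_M f\|_2$. Writing $f-Q_N P_M f = (f-Q_N f)+Q_N(f-P_M f)$ and applying the triangle inequality together with $\|Q_N\|\le 1$ gives $\|f-Q_N P_M f\|_2\le(\epsilon_M+\epsilon_N)\|f\|_2$, and hence $\|Q_N P_M f\|_2\ge(1-\epsilon_M-\epsilon_N)\|f\|_2$. This reduces the whole statement to an upper bound for the operator norm of $Q_N P_M$ expressed through $|M|$ and $|N|$.

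The crux is this upper estimate. Using Plancherel once more I would rewrite $\|Q_N P_M f\|_2^2$ as a constant times $\int_N |\mathcal F^{i,j}[\chi_M f](\mathbf w)|^2\,d\mathbf w$, and then bound the integrand pointwise: since $\mathcal F^{i,j}[\chi_M f](\mathbf w)=\int_M e^{-i w_1 y_1} f(\mathbf y)\,e^{-j w_2 y_2}\,d\mathbf y$ and the exponentials have unit modulus, the quaternion Cauchy--Schwarz inequality (Lemma \ref{lem 2.1}) yields $|\mathcal F^{i,j}[\chi_M f](\mathbf w)|^2\le |M|\int_M|f(\mathbf y)|^2\,d\mathbf y\le |M|\,\|f\|_2^2$. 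Integrating over $N$ then produces a bound of the form $\|Q_N P_M f\|_2\le C\sqrt{|M||N|}\,\|f\|_2$, where the constant $C$ is dictated by the normalization in the QFT Plancherel identity and is what ultimately delivers the factor $2\pi$ in the conclusion. The main obstacle here is that $\mathcal F^{i,j}$ places $f(\mathbf y)$ \emph{between} two non-commuting exponential kernels, so the modulus estimate cannot proceed by factoring the kernel as in the commutative Fourier case; it must instead be carried out at the level of $|\,\cdot\,|$, which is exactly where the multiplicativity $|qp|=|q||p|$ from (\ref{eqn 2.6}) and the quaternion Cauchy--Schwarz inequality do the essential work.

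Finally I would combine the two bounds, $(1-\epsilon_M-\epsilon_N)\|f\|_2\le\|Q_N P_M f\|_2\le C\sqrt{|M||N|}\,\|f\|_2$, and, since $f\neq 0$, divide by $\|f\|_2$ and square to obtain $|M||N|\ge 2\pi(1-\epsilon_M-\epsilon_N)^2$. I expect the only delicate points to be the careful bookkeeping of the Plancherel constants needed to land on the factor $2\pi$, and the verification that the sandwiched modulus estimate is applied correctly, both of which become routine once the operators $P_M$ and $Q_N$ are set up.
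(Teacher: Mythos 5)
The paper never proves this lemma: it is quoted from \cite{32} and used as a black box, so there is no in-paper proof to compare yours against. Judged on its own, your proposal is the standard Donoho--Stark operator argument and is essentially sound: $P_M$ and $Q_N$ are well defined on $L^2(\mathbb R^2,\mathbb H)$ (extend Lemma \ref{inverse} from $L^1$ by density), the decomposition $f-Q_NP_Mf=(f-Q_Nf)+Q_N(f-P_Mf)$ together with $\|Q_N\|\le 1$ gives the lower bound, and your observation about the sandwiched kernel is exactly right --- $|e^{-iw_1y_1}f(\mathbf y)e^{-jw_2y_2}|=|f(\mathbf y)|$ by (\ref{eqn 2.6}), so the estimate $|\mathcal F^{i,j}[\chi_Mf](\mathbf w)|\le\int_M|f(\mathbf y)|\,d\mathbf y\le\sqrt{|M|}\,\|f\|_2$ survives the noncommutativity.

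Two points need repair. First, the constant: with the paper's normalization $\|\mathcal F^{i,j}[f]\|_2^2=4\pi^2\|f\|_2^2$ (Lemma \ref{Plancherel}), your own estimates give
\begin{equation*}
\|Q_NP_Mf\|_2^2=\frac{1}{4\pi^2}\int_N\bigl|\mathcal F^{i,j}[\chi_Mf](\mathbf w)\bigr|^2\,d\mathbf w\le\frac{|M|\,|N|}{4\pi^2}\,\|f\|_2^2,
\end{equation*}
hence $|M|\,|N|\ge 4\pi^2(1-\epsilon_M-\epsilon_N)^2$. Your closing claim that the Plancherel bookkeeping ``delivers the factor $2\pi$'' cannot be realized: squaring your two inequalities can only produce even powers of $\pi$. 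This is harmless for the truth of the statement, since $4\pi^2>2\pi$ and your (stronger) bound implies the stated one --- the $2\pi$ evidently traces to the normalization used in \cite{32} rather than this paper's conventions --- but you should report the constant your computation actually yields instead of reverse-engineering the target. Second, the final squaring step silently assumes $\epsilon_M+\epsilon_N\le 1$; otherwise the lower bound $(1-\epsilon_M-\epsilon_N)\|f\|_2$ is vacuous while the right-hand side $2\pi(1-\epsilon_M-\epsilon_N)^2$ is strictly positive (take $M=N=\emptyset$ with $\epsilon_M=\epsilon_N=1$), so this hypothesis, implicit in the lemma as stated, must be made explicit. A cosmetic remark: the paper states Plancherel for the component-wise modulus (\ref{modulus}), whereas your pointwise Cauchy--Schwarz step uses the genuine quaternion modulus (\ref{eqn 2.5}); both versions of Plancherel hold for the two-sided QFT, but you should fix one convention and say which you are using.
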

\begin{definition}
Let $f,\phi\in L^2(\mathbb R^2,\mathbb H)$ where $\phi $ is a non zero window function then $\mathcal S^{\mathbb H}_{\phi,A_1,A_2}[f](\mathbf{w,u})$ is $\epsilon_{N}-$concentrated on $N \subseteq\mathbb{R}^2,$ if

\begin{equation}\left(\int_{\mathbb R^2}\int_{\mathbb R^2\setminus N}|\mathcal S^{\mathbb H}_{\phi,A_1,A_2}[f(\mathbf x)](\mathbf {w,u})|^2d\mathbf w d\mathbf u\right)^{\frac{1}{2}}\le\epsilon_N\|\mathcal S^{\mathbb H}_{\phi,A_1,A_2}[f]\|_2\end{equation}

where $\|\mathcal S^{\mathbb H}_{\phi,A_1,A_2}[f]\|_2=\left(\int_{\mathbb R^2}\int_{\mathbb R^2}|\mathcal S^{\mathbb H}_{\phi,A_1,A_2}[f(\mathbf x)](\mathbf {w,u})|^2d\mathbf w d\mathbf u\right)^{\frac{1}{2}}$
\end{definition}
\begin{theorem}\label{DONO ST-QOLCT}{\bf(Donoho-Stark's uncertainty principle for ST-QOLCT).}
 Let $\phi$ be the nonzero quaternion  window function and $f\ne 0$ be a quaternion signal function in $L^2(\mathbb{R}^2,\mathbb{H})$
is $\epsilon_{M}-$concentrated on measurable set $M \subseteq\mathbb{R}^2$, and $\mathcal{S}^{\mathbb H}_{\phi,A_1,A_2}[f](\mathbf{w,u})$ is
$\epsilon_{N}-$concentrated on $N \subseteq\mathbb{R}^2$.
Then
\begin{align}
		\begin{split}
|M||N|\geq2\pi b_1b_2 (1-\epsilon_{M}-\epsilon_{N})^{2}.
		\end{split}
	\end{align}
\begin{proof}
We have from Lemma(\ref{QFT_STQOLCT})
\begin{align*}
\mathbb S^{\mathbb H}_{\phi,A_1,A_2}\big[f\big](\mathbf{w,u})= \frac{1}{\sqrt{2\pi {i  b}_1}}e^{i[-\frac{1}{b_1}{w_1(d}_1p_1-b_1s_1)+\frac{d_1}{2b_1}{(w}^2_1+{p }^2_1)]}\mathcal F^{i,j}(h)\left(\frac{\mathbf w}{\mathbf b},\mathbf u\right)\\
\times\frac{1}{\sqrt{2\pi {j  b}_2}}e^{j[-\frac{1}{b_2}{w_2(d}_2p_2-b_2s_2)+\frac{d_2}{2b_2}{(w}^2_2+{p }^2_2)]}
\end{align*}
where
\begin{equation}\label{function_h}
h(\mathbf{x,u})= e^{i[\frac{a_1}{2b_1}x^2_1+\frac{1}{b_1}x_1p_1]}f\overline{\Theta}_{\mathbf u}(\mathbf x)e^{j[\frac{a_2}{2b_2}x^2_2+\frac{1}{b_2}x_2p_2]}
\end{equation}

For $\mathbf {u=x},$ we have $|h(\mathbf x)|=|f(\mathbf x)|\overline{\phi(0)}|$ , with $|\overline{\phi(0)}|> 0,$ since $f$ is $\epsilon_{M}-$concentrated on measurable set $M \subseteq\mathbb{R}^2$, then by definition(\ref{econ})
$$\left(\int_{\mathbb R^2\setminus M}|f(\mathbf x)|^2d\mathbf x\right)^{\frac{1}{2}}\le\epsilon_M\|f\|_2\Rightarrow\left(\int_{\mathbb R^2\setminus M}|h(\mathbf x)|^2d\mathbf x\right)^{\frac{1}{2}}\le\epsilon_M\|f\|_2$$
 i.e. $h(\mathbf x)$ is $\epsilon_{M}-$concentrated on measurable set $M \subseteq\mathbb{R}^2$.\\\\

 Also  it is given that $\mathcal{S}^{\mathbb H}_{\phi,A_1,A_2}[f](\mathbf{w,u})$ is
$\epsilon_{N}-$concentrated on $N \subseteq\mathbb{R}^2$ and we have  $|\mathbb S^{\mathbb H}_{\phi,A_1,A_2}\big[f\big](\mathbf{w,u})|=|\frac{1}{\sqrt{2\pi\mathbf b_1}}\mathcal F^{i,j}(h)\left(\frac{\mathbf w}{\mathbf b},\mathbf u\right)\frac{1}{\sqrt{2\pi\mathbf b_1}}|$, which implies $\mathcal F^{i,j}(h)\left(\frac{\mathbf w}{\mathbf b},\mathbf u\right)$ is $\epsilon_{N}-$concentrated on $N \subseteq\mathbb{R}^2$, that is to say, is $\mathcal F^{i,j}(h)\left({\mathbf w},\mathbf u\right)$ is $\epsilon_{N}-$concentrated on $\frac{N}{\mathbf b} \subseteq\mathbb{R}^2$.
 Hence, applying Lemma (\ref{DONOqft}) to the function $h$, we obtain
 \begin{equation}|M|\left|\frac{N}{\mathbf b}\right|\geq2\pi (1-\epsilon_{M}-\epsilon_{N})^{2}.\end{equation}
 Which gives
 \begin{equation}|M||{N}|\geq2\pi  b_1b_2 (1-\epsilon_{M}-\epsilon_{N})^{2}.\end{equation}
\end{proof}
which completes the proof.
\end{theorem}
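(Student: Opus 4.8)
The plan is to reduce the statement to the Donoho--Stark principle for the QFT (Lemma \ref{DONOqft}) by exploiting the factorization of the ST-QOLCT through the QFT established in Lemma \ref{QFT_STQOLCT}. The central observation is that in that factorization the ST-QOLCT equals the QFT of the auxiliary signal $h$ evaluated at the dilated frequency $\mathbf w/\mathbf b$, multiplied on the left and right by purely exponential, hence unimodular, factors. Since those factors have modulus one, taking absolute values collapses the identity to
\[
\left|\mathcal S^{\mathbb H}_{\phi,A_1,A_2}[f](\mathbf{w,u})\right| = \frac{1}{2\pi\sqrt{|b_1 b_2|}}\left|\mathcal F^{i,j}(h)\left(\tfrac{\mathbf w}{\mathbf b},\mathbf u\right)\right|,
\]
so every $L^2$-concentration statement for the ST-QOLCT transfers verbatim into a concentration statement for $\mathcal F^{i,j}(h)$, up to a rescaling of the frequency variable.

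First I would transfer the spatial concentration hypothesis from $f$ to $h$. Specializing the definition of $h$ to $\mathbf u = \mathbf x$ makes the window factor $\overline{\phi(\mathbf{x-u})}$ collapse to the constant $\overline{\phi(0)}$, and since the remaining exponentials are unimodular this gives $|h(\mathbf x)| = |\overline{\phi(0)}|\,|f(\mathbf x)|$ with $|\overline{\phi(0)}|>0$. Consequently the inequality defining $\epsilon_M$-concentration of $f$ on $M$ carries over unchanged to $h$, so that $h$ is itself $\epsilon_M$-concentrated on $M$.

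Next I would transfer the frequency concentration. The hypothesis that $\mathcal S^{\mathbb H}_{\phi,A_1,A_2}[f]$ is $\epsilon_N$-concentrated on $N$, combined with the modulus identity above, shows that $\mathcal F^{i,j}(h)(\mathbf w/\mathbf b,\mathbf u)$ is $\epsilon_N$-concentrated on $N$; undoing the dilation $\mathbf w\mapsto \mathbf w/\mathbf b$ then shows that $\mathcal F^{i,j}(h)(\mathbf w,\mathbf u)$ is $\epsilon_N$-concentrated on the rescaled set $N/\mathbf b$. With both hypotheses now phrased for $h$ and its QFT, I would invoke Lemma \ref{DONOqft} to obtain
\[
|M|\,\left|\frac{N}{\mathbf b}\right| \ge 2\pi (1-\epsilon_M-\epsilon_N)^2 .
\]
Finally, the Jacobian of the dilation gives $|N/\mathbf b| = |N|/(b_1 b_2)$, and rearranging yields the claimed bound $|M||N| \ge 2\pi b_1 b_2 (1-\epsilon_M-\epsilon_N)^2$.

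The main obstacle I anticipate is bookkeeping rather than conceptual: I must track the dilation $\mathbf w \mapsto \mathbf w/\mathbf b$ consistently through both the definition of the concentration set and the Lebesgue measure, ensuring the factor $b_1 b_2$ surfaces exactly once and on the correct side of the inequality. A secondary point demanding care is that the auxiliary function $h$ genuinely depends on both $\mathbf x$ and $\mathbf u$; the substitution $\mathbf u = \mathbf x$ used to import the spatial concentration must be reconciled with the full $(\mathbf w,\mathbf u)$-integration appearing in the concentration definition, and I would make explicit that it is precisely the unimodularity of the exponential factors together with the positivity $|\phi(0)|>0$ that licenses passing both concentration conditions into the single-variable framework of Lemma \ref{DONOqft}.
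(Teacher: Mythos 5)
Your proposal is correct and follows essentially the same route as the paper's own proof: the modulus identity from Lemma \ref{QFT_STQOLCT}, the substitution $\mathbf u=\mathbf x$ giving $|h(\mathbf x)|=|\overline{\phi(0)}||f(\mathbf x)|$ to transfer $\epsilon_M$-concentration to $h$, the dilation $\mathbf w\mapsto\mathbf w/\mathbf b$ to transfer $\epsilon_N$-concentration to $N/\mathbf b$, and the application of Lemma \ref{DONOqft} followed by the Jacobian factor $|N/\mathbf b|=|N|/(b_1b_2)$. You even flag the same point the paper leaves implicit (the residual $\mathbf u$-dependence of $h$ versus the single-variable form of Lemma \ref{DONOqft}), so there is nothing to add.
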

\begin{corollary} If $f\overline{\Theta}_{\mathbf u}(\mathbf x)\in L^2(\mathbb R^2,\mathbb H),$  $supp f\overline{\Theta}_{\mathbf u}(\mathbf x)\subseteq M$  and $supp \mathbb S^{\mathbb H}_{\phi,A_1,A_2}\big[f\big](\mathbf{w,u})\subseteq N,$ then
 \begin{equation}|M||{N}|\geq2\pi \mathbf b .\end{equation}
 \begin{proof}
It is clear from definition (\ref{econ}) that $f(\mathbf x)$ is $0-$concentrated on $M$ iff $supp(f)=M.$ Therefore if we take $\epsilon_{M}=\epsilon_{N}=0$ in theorem (\ref{DONO ST-QOLCT}) we get desired result.
 \end{proof}
\end{corollary}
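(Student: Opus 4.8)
The plan is to recognize that this Corollary is simply the boundary case $\epsilon_M=\epsilon_N=0$ of the Donoho--Stark uncertainty principle already established in Theorem \ref{DONO ST-QOLCT}, so the entire argument reduces to translating the hypotheses about supports into the language of $\epsilon$-concentration and then invoking the theorem.

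First I would observe the elementary equivalence between exact support and $0$-concentration. Setting $\epsilon=0$ in Definition \ref{econ}, the defining inequality becomes
\begin{equation*}
\left(\int_{\mathbb R^2\setminus M}|f(\mathbf x)|^2\,d\mathbf x\right)^{1/2}\le 0\cdot\|f\|_2=0,
\end{equation*}
which forces $f(\mathbf x)=0$ for almost every $\mathbf x\in\mathbb R^2\setminus M$, that is, $\operatorname{supp}(f)\subseteq M$. The converse is immediate. Applying this to the modified signal $f\overline{\Theta}_{\mathbf u}(\mathbf x)$, the hypothesis $\operatorname{supp} f\overline{\Theta}_{\mathbf u}\subseteq M$ says precisely that $f\overline{\Theta}_{\mathbf u}$ (equivalently, via the pointwise identity $|h(\mathbf x)|=|f(\mathbf x)||\overline{\phi(0)}|$ used in the proof of the theorem, the relevant auxiliary function) is $0$-concentrated on $M$. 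Likewise the hypothesis $\operatorname{supp}\mathbb S^{\mathbb H}_{\phi,A_1,A_2}[f]\subseteq N$ says that the ST-QOLCT of $f$ is $0$-concentrated on $N$.

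With both concentration parameters set to zero, I would then feed these into Theorem \ref{DONO ST-QOLCT}, whose conclusion reads $|M||N|\ge 2\pi b_1 b_2\,(1-\epsilon_M-\epsilon_N)^2$. Substituting $\epsilon_M=\epsilon_N=0$ collapses the factor $(1-\epsilon_M-\epsilon_N)^2$ to $1$, yielding
\begin{equation*}
|M||N|\ge 2\pi b_1 b_2 = 2\pi\,\mathbf b,
\end{equation*}
which is exactly the claimed bound. There is no genuine obstacle here: the only point requiring care is justifying that the support hypotheses are the correct specialization of the concentration hypotheses, and that the constant $2\pi b_1b_2$ in the theorem is what the paper abbreviates as $2\pi\mathbf b$. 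Since every analytic ingredient is already contained in the preceding theorem and definition, the Corollary follows directly and the proof is complete.
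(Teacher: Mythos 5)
Your proposal is correct and takes essentially the same route as the paper: both arguments identify the support hypotheses with $0$-concentration via Definition \ref{econ} and then substitute $\epsilon_M=\epsilon_N=0$ into Theorem \ref{DONO ST-QOLCT} to collapse the factor $(1-\epsilon_M-\epsilon_N)^2$ to $1$. If anything, your write-up is slightly more careful than the paper's one-line proof, since you correctly state the equivalence as $\mathrm{supp}(f)\subseteq M$ (rather than the paper's ``$\mathrm{supp}(f)=M$'') and you note that the concentration hypothesis must be routed through the modified signal $f\overline{\Theta}_{\mathbf u}$.
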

\parindent=0mm \vspace{.1in}

\subsection{Hardy's uncertainty principle}
G.H Hardy introduced  Hardy's uncertainty principle\cite{HA33} in 1933 which is  qualitative in nature, it states that it is impossible for a non zero signal function and its Fourier transform to decrease very rapidly simultaneously. We first present
the Hardy's UP for the Two-sided QFT\cite{17}.
\begin{lemma}\label{Hardy QFT}
Let $\alpha \ $  and $\beta $  are positive constants .Suppose $f\in L^{2}({\mathbb R}^2,{\mathbb H})$  with

\begin{equation}{|f\left(\mathbf x\right)|}\le {Ce}^{-\alpha {\left|\mathbf x\right|}^2},\  \mathbf x\in {\mathbb R}^2.\end{equation}

\begin{equation}{|\ {\mathcal F}^{i,j }\left\{f\right\}\left(\mathbf w\right)|}\le {C' e}^{-\beta {\left|\mathbf w\right|}^2},\ \mathbf w\in {\mathbb R}^2.\end{equation}

for some positive constants $C,C'.$Then, three cases can occur :
\begin{itemize}
 \item   If $\alpha \beta > \frac {1}{4}$, then $f=0$.
 \item  If  $\alpha \beta = \frac {1}{4}$, then $\ f(t)=Ae^{-\alpha {\left|\mathbf x\right|}^2}$,\ whit $A$\ is a quaternion constant.
 \item  If $\alpha \beta <\frac {1}{4},$\ then there are infinitely many such functions $f$.
\end{itemize}
\end{lemma}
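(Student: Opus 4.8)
The plan is to reduce the quaternionic statement to the classical multidimensional Hardy uncertainty principle by passing to the four real components $f_m$ of $f=f_0+if_1+jf_2+kf_3$. First I would observe that the two hypotheses descend to each component. The spatial bound gives $|f_m(\mathbf x)|\le|f(\mathbf x)|\le Ce^{-\alpha|\mathbf x|^2}$ for every $m$, while the definition of the modulus in \eqref{modulus}, namely $|\mathcal F^{i,j}[f]|=\big(\sum_{m=0}^{3}|\mathcal F^{i,j}[f_m]|^2\big)^{1/2}$, immediately yields $|\mathcal F^{i,j}[f_m](\mathbf w)|\le|\mathcal F^{i,j}[f](\mathbf w)|\le C'e^{-\beta|\mathbf w|^2}$ for each $m$.

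The crux is to convert the Gaussian bound on the \emph{quaternion} transform $\mathcal F^{i,j}[f_m]$ of a real-valued $f_m$ into a Gaussian bound on its \emph{classical} Fourier transform. To this end I would expand both exponential kernels in \eqref{QFT} by Euler's formula, using the multiplication rules \eqref{eqn 2.2}--\eqref{eqn 2.3}, and write $\mathcal F^{i,j}[f_m]$ as a quaternion combination of the four real integrals of $f_m$ against the products $\cos(w_1x_1)\cos(w_2x_2)$, $\sin(w_1x_1)\cos(w_2x_2)$, $\cos(w_1x_1)\sin(w_2x_2)$ and $\sin(w_1x_1)\sin(w_2x_2)$. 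A short computation then yields the pointwise identity
\[
|\mathcal F^{i,j}[f_m](w_1,w_2)|^2=\tfrac12\big(|\widehat{f_m}(w_1,w_2)|^2+|\widehat{f_m}(w_1,-w_2)|^2\big),
\]
where $\widehat{f_m}$ denotes the ordinary single-unit complex Fourier transform of $f_m$ on $\mathbb R^2$. Consequently $|\widehat{f_m}(\mathbf w)|\le\sqrt2\,C'e^{-\beta|\mathbf w|^2}$, so each real component $f_m$ now satisfies the hypotheses of the classical Hardy theorem on $\mathbb R^2$ with the \emph{same} pair $(\alpha,\beta)$.

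I would then invoke the classical multidimensional Hardy uncertainty principle (which itself follows from the one-dimensional version by the separability $e^{-\alpha|\mathbf x|^2}=e^{-\alpha x_1^2}e^{-\alpha x_2^2}$) componentwise and reassemble. If $\alpha\beta>\tfrac14$ then every $f_m\equiv0$, hence $f=0$; if $\alpha\beta=\tfrac14$ then $f_m(\mathbf x)=a_m e^{-\alpha|\mathbf x|^2}$ with $a_m\in\mathbb R$, so that $f(\mathbf x)=A\,e^{-\alpha|\mathbf x|^2}$ with the quaternion constant $A=a_0+ia_1+ja_2+ka_3$; and if $\alpha\beta<\tfrac14$ the classical theorem already produces infinitely many admissible functions, which remain admissible here.

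The main obstacle I anticipate is the algebraic bookkeeping in the second step: because $i$ and $j$ anticommute, the left and right kernels do not separate, and one must track all four real integrals carefully to reach the clean identity relating $|\mathcal F^{i,j}[f_m]|$ to $|\widehat{f_m}|$ (the anticommutation $zj=j\bar z$ for $z$ in the plane $\mathbb C_i=\mathbb R+\mathbb R i$ is the key simplification). A secondary point needing care is the equality case: one must verify that the componentwise Gaussians share the common exponent $\alpha$---which they do, since the single constant $\beta$ controls every $|\widehat{f_m}|$---so that they genuinely assemble into one quaternion-constant multiple $A e^{-\alpha|\mathbf x|^2}$ rather than four unrelated Gaussians.
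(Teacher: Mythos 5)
The paper never proves Lemma \ref{Hardy QFT} at all: it is imported verbatim as a known result, with a citation to \cite{17}, so there is no internal argument to measure you against. Judged on its own terms, your proposal is correct and is essentially the standard route to this result (and the same in spirit as the cited source): reduce to the four real components $f_m$, transfer both Gaussian bounds to them, and invoke the classical two-dimensional Hardy theorem componentwise. Your key identity checks out: writing $F_m=\mathcal F^{i,j}[f_m]$ for real $f_m$, expansion of the kernels gives $F_m=\hat f^{cc}_m-i\hat f^{sc}_m-j\hat f^{cs}_m+k\hat f^{ss}_m$ in the obvious cosine/sine notation, while $\widehat{f_m}(w_1,\pm w_2)$ are $(\hat f^{cc}_m\mp\hat f^{ss}_m)-i(\hat f^{sc}_m\pm\hat f^{cs}_m)$, and summing squared moduli yields exactly $|F_m(w_1,w_2)|^2=\tfrac12\bigl(|\widehat{f_m}(w_1,w_2)|^2+|\widehat{f_m}(w_1,-w_2)|^2\bigr)$; since the bound $C'e^{-\beta|\mathbf w|^2}$ is symmetric in $w_2$, the constant $\sqrt2$ is harmless, and the thresholds $\alpha\beta\gtrless\tfrac14$ match the angular-frequency convention of the kernel $e^{-iw_1x_1}$. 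Two small points deserve explicit mention. First, your step $|\mathcal F^{i,j}[f_m]|\le|\mathcal F^{i,j}[f]|$ hinges on reading the hypothesis with the component modulus \eqref{modulus}, not the plain quaternion modulus of the value $\mathcal F^{i,j}[f](\mathbf w)$; these differ, because for quaternionic $f$ the terms $jF_2$, $kF_3$ enter with reflected first argument and can cancel against $F_0$, $iF_1$. The paper's placement of definition \eqref{modulus} makes your reading the intended one, but the sentence should be said. Second, deducing the two-dimensional Hardy theorem from the one-dimensional one ``by separability'' is glib in the equality case $\alpha\beta=\tfrac14$: one first applies 1D Hardy to the partial transform $x_1\mapsto\mathcal F_{x_2}f(x_1,w_2)$ to get $A(w_2)e^{-\alpha x_1^2}$, then a second 1D application to $A$ (equivalently to $B$ with $\hat B=A$) to conclude $f=A_0e^{-\alpha|\mathbf x|^2}$; you flag that care is needed, and with this two-step argument the sketch closes correctly, after which the real constants $a_m$ reassemble into the quaternion constant $A$ exactly as you say.
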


By using Lemma\ref{QFT_STQOLCT} and Lemma \ref{Hardy QFT} , we derive Hardy’s uncertainty principle for the ST-QOLCT..
\begin{theorem}\label{ST-QOLCT_Hardy}
Let $\phi\in L^2(\mathbb R^2,\mathbb H)$  be a non zero window function . Suppose $f\in L^{2}({\mathbb R}^2,{\mathbb H})$  with
\begin{equation}\label{Hardy1}{\left|f(\mathbf x)\right|}\le {Ce}^{-\alpha {\left|\mathbf x\right|}^2},  \mathbf x\in {\mathbb R}^2. \end{equation}
\begin{equation}\label{Hardy2}{\left|\mathcal S^{\mathbb H}_{\phi,A_1,A_2}f(\mathbf b \mathbf{w+p},\mathbf u)\right|} \le C'e^{-\beta {\left|\mathbf w\right|}^2},\ \mathbf w\in {\mathbb R}^2. \end{equation}
for some constants $\alpha, \beta>0$ and $C,C'$ are positive constants ,then:\\
 \begin{itemize}
 \item  If $\alpha \beta >\frac{1}{4}$, then $f=0$.
 \item  If $\alpha \beta =\frac{1}{4}$, then $\ f(\mathbf x)=e^{-i\frac{a_1}{2b_1}x^2_1-i\frac{1}{b_1}x_1p_1}\frac{A}{\overline{(\phi(0))}} e^{-\alpha {\left|\mathbf x\right|}^2}e^{-j\frac{a_2}{2b_2}x^2_2-j\frac{1}{b_2}x_2j_2}$, where $A$\  is a quaternion constant.
 \item  If  $\alpha \beta <\frac{1}{4},$\ then there are infinitely many $f$.
\end{itemize}
\begin{proof}
On substituting $\mathbf{u=x}$ in (\ref{function_h}),we have
\begin{align*}
h(\mathbf{x})= e^{i[\frac{a_1}{2b_1}x^2_1+\frac{1}{b_1}x_1p_1]}f(\mathbf x)\overline{\phi(0)}e^{j[\frac{a_2}{2b_2}x^2_2+\frac{1}{b_2}x_2p_2]}
\end{align*}
clearly RHS of above equation belongs to $L^2(\mathbb R^2,\mathbb H)$  and $|\overline{\phi(0)}|$ is a positve quantity and
$$|h(\mathbf{x})|=|f(\mathbf x)||\overline{\phi(0)}|\le|\overline{\phi(0)}|{Ce}^{-\alpha {\left|x\right|}^2}=C_1{e}^{-\alpha {\left|\mathbf x\right|}^2}\eqno(4.3)$$
Now applying(\ref{QFT_STQOLCT}) and (\ref{Hardy2}),we have
\begin{equation}
|\mathcal F ^{i,j}[h(\mathbf x)](\mathbf w)|=\sqrt{ b_1b_2}{\left|\mathcal S^{\mathbb H}_{\phi,A_1,A_2}f(\mathbf{b w}+p,\mathbf u)\right|} \le\sqrt{\mathbf b}C_0e^{-\beta|\mathbf w|^2}
\end{equation}
Therefore, it follows from Lemma\ref{Hardy QFT} that,\\
If $\alpha \beta >\frac{1}{4}$ then $h=0$, so $f=0$.\\
If $\alpha \beta =\frac{1}{4}$ then\\
$f\left(\mathbf x\right)=Ae^{-\alpha {\left|\mathbf x\right|}^2}$, for some  constant $A.$\\
Hence
\[f\left(t\right)=e^{-i\frac{a_1}{2b_1}x^2_1-i\frac{1}{b_1}x_1p_1}\frac{A}{\overline{(\phi(0))}} e^{-\alpha {\left|\mathbf x\right|}^2}e^{-j\frac{a_2}{2b_2}x^2_2-j\frac{1}{b_2}x_2j_2}.\]

If $\alpha \beta <\frac{1}{4},$ then there are infinitely many such functions $f$, that verify  \eqref{Hardy1} and ( \eqref{Hardy2}.\\
This completes the proof.

\end{proof}
\end{theorem}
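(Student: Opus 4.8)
The plan is to transfer the two Gaussian decay hypotheses on $f$ and on its ST-QOLCT onto a single auxiliary function $h$, and then simply invoke the Hardy principle for the QFT already recorded in Lemma~\ref{Hardy QFT}. The bridge is Lemma~\ref{QFT_STQOLCT}, which exhibits the ST-QOLCT as a QFT of $h$ dressed by unimodular chirp factors and a scaling of the frequency variable; the whole proof is the observation that these dressing factors have modulus one and therefore do not affect either Gaussian bound.

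First I would specialise to the diagonal $\mathbf u=\mathbf x$. Setting $\mathbf u=\mathbf x$ in \eqref{function_h} collapses $f\overline{\Theta}_{\mathbf u}(\mathbf x)=f(\mathbf x)\overline{\phi(\mathbf{x-u})}$ to $f(\mathbf x)\overline{\phi(0)}$, so that
\[
h(\mathbf x)=e^{i[\frac{a_1}{2b_1}x_1^2+\frac1{b_1}x_1p_1]}\,f(\mathbf x)\,\overline{\phi(0)}\,e^{j[\frac{a_2}{2b_2}x_2^2+\frac1{b_2}x_2p_2]}.
\]
Since $e^{i\theta}$ and $e^{j\theta}$ are unit quaternions for real $\theta$, taking moduli kills both chirps and gives $|h(\mathbf x)|=|\phi(0)|\,|f(\mathbf x)|$; here one must assume $\phi(0)\neq0$ (otherwise $h\equiv0$ and the statement is vacuous). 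Combined with \eqref{Hardy1} this already yields the space-side estimate $|h(\mathbf x)|\le C_1e^{-\alpha|\mathbf x|^2}$ with $C_1=C|\phi(0)|$, and shows $h\in L^2(\mathbb R^2,\mathbb H)$.

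Next I would produce the frequency-side bound. Reading Lemma~\ref{QFT_STQOLCT} in reverse and taking moduli (the two reciprocal square-root prefactors have modulus $1/(2\pi\sqrt{|b_1b_2|})$ and the exponentials are unimodular) expresses $|\mathcal F^{i,j}(h)(\mathbf w/\mathbf b,\mathbf u)|$ as a constant multiple of $|\mathcal S^{\mathbb H}_{\phi,A_1,A_2}f(\mathbf w,\mathbf u)|$. The normalisation $\mathbf{bw+p}$ in hypothesis \eqref{Hardy2} is chosen to cancel the scaling by $\mathbf b$, so that after this substitution \eqref{Hardy2} becomes $|\mathcal F^{i,j}(h)(\mathbf w)|\le C_2e^{-\beta|\mathbf w|^2}$. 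With both Gaussian bounds in hand, Lemma~\ref{Hardy QFT} applies to $h$ verbatim and returns the trichotomy: if $\alpha\beta>\tfrac14$ then $h=0$, hence $f=0$; if $\alpha\beta=\tfrac14$ then $h(\mathbf x)=Ae^{-\alpha|\mathbf x|^2}$ for a quaternion constant $A$; and if $\alpha\beta<\tfrac14$ there are infinitely many admissible $h$, each furnishing an admissible $f$.

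Finally, in the critical case I would recover $f$ by peeling the unimodular factors off both sides. Because the chirps multiply $f\overline{\phi(0)}$ on the left by $e^{i[\cdots]}$ and on the right by $e^{j[\cdots]}$, I solve
\[
f(\mathbf x)=e^{-i[\frac{a_1}{2b_1}x_1^2+\frac1{b_1}x_1p_1]}\,\big(A\,\overline{\phi(0)}^{-1}\big)\,e^{-\alpha|\mathbf x|^2}\,e^{-j[\frac{a_2}{2b_2}x_2^2+\frac1{b_2}x_2p_2]},
\]
absorbing the real scalar $e^{-\alpha|\mathbf x|^2}$ wherever convenient and relabelling $A\,\overline{\phi(0)}^{-1}$ as a new quaternion constant $A$, which gives the stated form. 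The main obstacle I expect is the non-commutative bookkeeping: the order of quaternion multiplication must be respected when inverting the two-sided chirp dressing and when replacing $\overline{\phi(0)}$ by its inverse. A second, more delicate point is the substitution $\mathbf w\mapsto\mathbf{bw+p}$: it maps the QFT argument to $\mathbf w+\mathbf p/\mathbf b$ rather than exactly $\mathbf w$, so the frequency-side Gaussian is strictly centred at $\mathbf p/\mathbf b$; one must either take $\mathbf p=0$ or absorb the off-centring by a negligible decrease of $\beta$, and verify this does not corrupt the threshold $\tfrac14$ governing the borderline case.
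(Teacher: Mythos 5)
Your proposal follows essentially the same route as the paper's own proof: substitute $\mathbf{u}=\mathbf{x}$ into \eqref{function_h} so that $|h(\mathbf x)|=|\phi(0)|\,|f(\mathbf x)|$, transfer the hypothesis \eqref{Hardy2} to a Gaussian bound on $\mathcal F^{i,j}[h]$ via Lemma \ref{QFT_STQOLCT}, invoke the QFT Hardy principle of Lemma \ref{Hardy QFT}, and peel off the unimodular chirps and $\overline{\phi(0)}$ in the critical case. The two caveats you flag are genuine and are silently glossed over in the paper's own argument: it assumes $\phi(0)\neq 0$ without justification, and it ignores the $\mathbf p$-shift mismatch between the argument $\mathbf{b}\mathbf{w}+\mathbf{p}$ in \eqref{Hardy2} and the linear modulation already absorbed into $h$ (your proposed fix of slightly decreasing $\beta$ would indeed spoil the borderline case $\alpha\beta=\tfrac14$; the clean repair is to redefine $h$ without the factors $e^{i p_1 x_1/b_1}$, $e^{j p_2 x_2/b_2}$ so the correspondence is exact).
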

It follows from theorem \ref{ST-QOLCT_Hardy} that it is impossible for a signal $f$\ and its two-sided ST-QOLCT to both decrease very rapidly.

\subsection{Beurling's uncertainty principle}

Beurling's uncertainty principle \cite{BE89}, \cite{HO91} is a mutant of Hardy's uncertainty principle.
The following Lemma is the Beurling's  uncertainty principle for the Two-sided QOLCT (\cite{gen} Cor. 4.7)
.
\begin{lemma}\label{beu lem QOLCT}
Let $f \in L^2\left(\mathbb {\mathbb R}^2,\mathbb H\right)\ and\ \ d\ge 0~$   satisfy
\begin{equation}
\int_{{\mathbb R}^2}{\int_{{\mathbb R}^2}{ \frac{{\left| f(\mathbf x )\right|}\ {\left|{\mathcal O}^{i,j }_{A_1,A_2}\left[f\right]\left(\mathbf w\right)\right|}}{{(1+\left|\mathbf x\right|+\left|\mathbf w\right|)}^d}e^{\left|\mathbf x\right|\left|\mathbf w\right|}}}\ d\mathbf xd\mathbf w <\infty, \end{equation}
Then \\
$f\left(\mathbf x\right)=P(\mathbf x)e^{-a|\mathbf x|^2},$   a.e.\\

Where $a>0$ and $P$\ is a quaternion polynomial of degree $< \frac{d-2}{2}$. \\
In particular, $f=0$\  a.e. when $d\le 2.$
\end{lemma}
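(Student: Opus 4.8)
The plan is to reduce the statement to the corresponding Beurling principle for the two-sided QFT, exactly in the spirit of the reductions used for the Donoho--Stark and Hardy principles above. The engine of the reduction is the chirp factorisation of the QOLCT: carrying out the computation of Lemma \ref{QFT_STQOLCT} without the window factor, one obtains
\begin{equation*}
\mathcal O^{i,j}_{A_1,A_2}[f](\mathbf w)=C_1(\mathbf w)\,\mathcal F^{i,j}[g]\!\left(\tfrac{\mathbf w}{\mathbf b}\right)C_2(\mathbf w),\qquad g(\mathbf x)=e^{i\left[\frac{a_1}{2b_1}x_1^2+\frac{1}{b_1}x_1p_1\right]}f(\mathbf x)\,e^{j\left[\frac{a_2}{2b_2}x_2^2+\frac{1}{b_2}x_2p_2\right]},
\end{equation*}
where $C_1,C_2$ are unimodular quaternion chirps. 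Two facts make the reduction work: the chirps are unimodular, so $|g(\mathbf x)|=|f(\mathbf x)|$ pointwise, and $|\mathcal O^{i,j}_{A_1,A_2}[f](\mathbf w)|=\frac{1}{2\pi\sqrt{|b_1b_2|}}\,\big|\mathcal F^{i,j}[g]\big(\tfrac{\mathbf w}{\mathbf b}\big)\big|$.

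First I would substitute these two identities into the hypothesis integral, replacing $|f|$ by $|g|$ and the QOLCT modulus by the QFT modulus, and then change variables $\boldsymbol\xi=\mathbf w/\mathbf b$ (that is $w_s=b_s\xi_s$, $d\mathbf w=|b_1b_2|\,d\boldsymbol\xi$). The result is a Beurling-type integral for the pair $g,\mathcal F^{i,j}[g]$, but carrying the cross-weight $e^{|\mathbf x|\,|\mathbf b\boldsymbol\xi|}$ and the polynomial weight $(1+|\mathbf x|+|\mathbf b\boldsymbol\xi|)^{-d}$ in place of the unscaled $e^{|\mathbf x||\boldsymbol\xi|}$ and $(1+|\mathbf x|+|\boldsymbol\xi|)^{-d}$. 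Once this integral is shown finite, one applies the two-sided QFT Beurling principle (the $\mathbb R^2$ result underlying \cite{gen}) to $g$, concluding $g(\mathbf x)=P(\mathbf x)e^{-a|\mathbf x|^2}$ with $\deg P<\frac{d-2}{2}$; undoing the unimodular chirps then returns $f$ in the asserted Gaussian-times-polynomial form (up to the quadratic-phase factors, exactly as in the Hardy conclusion of Theorem \ref{ST-QOLCT_Hardy}), and for $d\le 2$ the degree bound forces $P\equiv 0$, hence $f=0$.

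The main obstacle is precisely the reconciliation of the anisotropic frequency rescaling $\mathbf w\mapsto\mathbf w/\mathbf b$ with the coupled Euclidean cross-weight $e^{|\mathbf x||\mathbf w|}$: since $|\mathbf b\boldsymbol\xi|=\sqrt{b_1^2\xi_1^2+b_2^2\xi_2^2}$ is not a scalar multiple of $|\boldsymbol\xi|$ when $|b_1|\ne|b_2|$, the scaled integral does not transform verbatim into the QFT Beurling hypothesis. I would handle this by combining the QFT dilation property (Lemma \ref{dilation}) with the two-sided estimate $\min(|b_1|,|b_2|)\,|\boldsymbol\xi|\le|\mathbf b\boldsymbol\xi|\le\max(|b_1|,|b_2|)\,|\boldsymbol\xi|$, passing to a diagonally dilated copy of $g$ whose Beurling integral carries the unscaled weight $e^{|\mathbf x||\boldsymbol\xi|}$ while altering the polynomial weight only by a bounded constant factor (hence preserving finiteness). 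Verifying that this dilation does not degrade the admissible range of $d$, i.e. that the threshold $\deg P<\frac{d-2}{2}$ survives the rescaling, is the delicate bookkeeping step; once it is in place the particular case $d\le 2\Rightarrow f=0$ follows at once.
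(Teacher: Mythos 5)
You are attempting more than the paper does: the paper offers no proof of this lemma at all, importing it verbatim from \cite{gen} (Cor.\ 4.7), so there is no internal argument to compare against. Your strategy --- the chirp factorisation $\mathcal O^{i,j}_{A_1,A_2}[f](\mathbf w)=C_1(\mathbf w)\,\mathcal F^{i,j}[g](\mathbf w/\mathbf b)\,C_2(\mathbf w)$ with unimodular quaternion chirps, the modulus identities $|g|=|f|$ and $\bigl|\mathcal O^{i,j}_{A_1,A_2}[f](\mathbf w)\bigr|=\tfrac{1}{2\pi\sqrt{|b_1b_2|}}\bigl|\mathcal F^{i,j}[g](\mathbf w/\mathbf b)\bigr|$, the substitution $\boldsymbol\xi=\mathbf w/\mathbf b$, and the appeal to the QFT Beurling principle --- is indeed the standard route, and it is the same reduction pattern this paper uses for its Hardy and Donoho--Stark theorems. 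Up to that point your computation is correct.

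The gap is precisely at the step you defer as ``delicate bookkeeping,'' and it is not bookkeeping: it is where the argument fails. After the substitution, the hypothesis gives finiteness of $\int\!\!\int |g(\mathbf x)|\,|\mathcal F^{i,j}[g](\boldsymbol\xi)|\,(1+|\mathbf x|+|\mathbf b\boldsymbol\xi|)^{-d}\,e^{|\mathbf x||\mathbf b\boldsymbol\xi|}\,d\mathbf x\,d\boldsymbol\xi$, and your two-sided estimate only yields $e^{|\mathbf x||\mathbf b\boldsymbol\xi|}\ge e^{c|\mathbf x||\boldsymbol\xi|}$ with $c=\min(|b_1|,|b_2|)$. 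When $c\ge1$ the reduction closes (the polynomial weight changes only by a constant, which is harmless). When $c<1$, however, no dilation can repair the exponential deficit: an isotropic change $(\mathbf x,\boldsymbol\xi)\mapsto(\lambda\mathbf x,\boldsymbol\xi/\lambda)$ leaves the product $|\mathbf x||\boldsymbol\xi|$ \emph{invariant}, and an anisotropic one $(\Lambda\mathbf x,\Lambda^{-1}\boldsymbol\xi)$ preserves $\langle\mathbf x,\boldsymbol\xi\rangle$ but merely transfers the distortion of $|\mathbf x||\boldsymbol\xi|$ from the frequency side to the space side --- this is forced by the reciprocal scaling of the Fourier transform, so your proposed ``diagonally dilated copy of $g$'' cannot restore the unscaled weight. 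Worse, the weakened hypothesis with weight $e^{c|\mathbf x||\boldsymbol\xi|}$, $c<1$, cannot imply the conclusion by \emph{any} argument: for a Gaussian $g(\mathbf x)=e^{-a|\mathbf x|^2}$ the QFT is again Gaussian and the exponent $-a|\mathbf x|^2-|\boldsymbol\xi|^2/(4a)+c|\mathbf x||\boldsymbol\xi|$ is negative definite exactly because $c^2<1$, so the integral is finite even with $d=0$ while $g\neq0$, contradicting the case ``$d\le2\Rightarrow f=0$.'' Hence your proof is valid only under $\min(|b_1|,|b_2|)\ge1$; for general $b_s$ the exponential weight in the QOLCT hypothesis must itself be $b$-adapted (e.g.\ $e^{|\mathbf x|\,|\mathbf w/\mathbf b|}$, or a $|b|$-normalised exponent as in the one-dimensional OLCT literature) for the reduction to the QFT Beurling theorem to go through --- a caveat that equally casts doubt on the unweighted statement as quoted here, but which your write-up should in any case make explicit rather than defer.
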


On the basis of lemma \ref{beu lem QOLCT}, we give the Beurlings' uncertainty principle associated with ST-QOLCT domains.
\begin{theorem}\label{bucp ST-QOLCT}
Let   $\phi,f \in L^2\left(\mathbb {\mathbb R}^2,\mathbb H\right)$ where $\phi$ be a non zero quaternion window function and $d\ge0$ satisfy
\begin{equation}\int_{{\mathbb R}^2}{\int_{{\mathbb R}^2}{ \frac{{\left| f(\mathbf x )\right||\overline{\phi\mathbf{(x-u)}}|}\ {\left|{\mathcal S}^{\mathbb H}_{\phi,A_1,A_2}\left[f\right](\mathbf{w,u})\right|}}{{(1+\left|\mathbf x\right|+\left|w\right|)}^d}e^{\left|\mathbf x\right|\left|\mathbf w\right|}}}\ dxdw<\infty\end{equation}
Then \\
$f\left(\mathbf x\right)=\frac{P(\mathbf x)}{\overline{\phi(\mathbf {x-u)}}}e^{-a|\mathbf x|^2},$   a.e.\\

Where $a>0$ and $P$\ is a quaternion polynomial of degree $< \frac{d-2}{2}$. \\
In particular, $f=0$\  a.e. when $d\le 2.$
\end{theorem}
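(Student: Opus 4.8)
The plan is to reduce the statement to the already-established Beurling inequality for the two-sided QOLCT (Lemma \ref{beu lem QOLCT}) by recognizing the ST-QOLCT as an ordinary QOLCT of the modified signal. First I would fix $\mathbf u$ and recall from (\ref{IST1}) that
\[
\mathcal S^{\mathbb H}_{\phi,A_1,A_2}[f](\mathbf{w,u})=\mathcal O^{i,j}_{A_1,A_2}\big[f\overline{\Theta}_{\mathbf u}\big](\mathbf w),\qquad f\overline{\Theta}_{\mathbf u}(\mathbf x)=f(\mathbf x)\overline{\phi(\mathbf{x-u})}.
\]
Setting $g:=f\overline{\Theta}_{\mathbf u}$, the multiplicativity and conjugation-invariance of the quaternion modulus in (\ref{eqn 2.6}) give $|g(\mathbf x)|=|f(\mathbf x)|\,|\overline{\phi(\mathbf{x-u})}|$, while the identity (\ref{IST1}) gives directly $|\mathcal O^{i,j}_{A_1,A_2}[g](\mathbf w)|=|\mathcal S^{\mathbb H}_{\phi,A_1,A_2}[f](\mathbf{w,u})|$.

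Next I would substitute these two modulus identities into the hypothesis of the theorem. Doing so converts the assumed finiteness condition into
\[
\int_{\mathbb R^2}\int_{\mathbb R^2}\frac{|g(\mathbf x)|\,|\mathcal O^{i,j}_{A_1,A_2}[g](\mathbf w)|}{(1+|\mathbf x|+|\mathbf w|)^d}\,e^{|\mathbf x||\mathbf w|}\,d\mathbf x\,d\mathbf w<\infty,
\]
which is precisely the hypothesis of Lemma \ref{beu lem QOLCT} applied to the signal $g$. Invoking that lemma yields $g(\mathbf x)=P(\mathbf x)\,e^{-a|\mathbf x|^2}$ with $a>0$ and $P$ a quaternion polynomial of degree $<\frac{d-2}{2}$. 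Unwinding $g=f\overline{\phi(\mathbf{x-u})}$ and dividing on the set where $\overline{\phi(\mathbf{x-u})}\neq0$ gives
\[
f(\mathbf x)=\frac{P(\mathbf x)}{\overline{\phi(\mathbf{x-u})}}\,e^{-a|\mathbf x|^2}\quad\text{a.e.},
\]
and the degenerate regime $d\le 2$ forces $P\equiv0$, hence $f=0$ a.e., again inherited verbatim from Lemma \ref{beu lem QOLCT}.

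The only genuine subtlety, and the step I would treat most carefully, is the admissibility of the modified signal $g=f\overline{\Theta}_{\mathbf u}$ as an input to Lemma \ref{beu lem QOLCT}: one must check that $g\in L^2(\mathbb R^2,\mathbb H)$ for the fixed parameter $\mathbf u$, which is where the standing hypotheses $f,\phi\in L^2(\mathbb R^2,\mathbb H)$ together with the finiteness of the weighted double integral are used, and one should record explicitly that the division by $\overline{\phi(\mathbf{x-u})}$ is legitimate only off the zero set of the window. Beyond this integrability bookkeeping and the care of carrying $\mathbf u$ as a frozen parameter throughout, the argument introduces no new estimate: it is a direct transplantation of the QOLCT Beurling inequality through the modulus identity supplied by the ST-QOLCT--QOLCT relation.
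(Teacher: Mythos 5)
Your proposal is correct and takes essentially the same route as the paper's own proof: both fix $\mathbf u$, identify $\mathcal S^{\mathbb H}_{\phi,A_1,A_2}[f](\mathbf{w,u})$ with $\mathcal O^{i,j}_{A_1,A_2}\big[f\overline{\Theta}_{\mathbf u}\big](\mathbf w)$ via (\ref{IST1})--(\ref{IST2}), rewrite the hypothesis as the Beurling condition of Lemma \ref{beu lem QOLCT} for the modified signal $f\overline{\Theta}_{\mathbf u}$, invoke that lemma, and divide by $\overline{\phi(\mathbf{x-u})}$ to conclude. Your explicit attention to the membership $f\overline{\Theta}_{\mathbf u}\in L^2(\mathbb R^2,\mathbb H)$ and to the legitimacy of the division off the zero set of the window is, if anything, slightly more careful than the paper, which asserts these points without comment.
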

\begin{proof}
From  (\ref{IST2}) we have $f\overline{\Theta}_\mathbf u(\mathbf x)=f(x)\overline{\phi(x-u)}\in L^2\left(\mathbb {\mathbb R}^2,\mathbb H\right),$ it follows that
 \begin{align*}
&\int_{{\mathbb R}^2}{\int_{{\mathbb R}^2}{ \frac{{\left| f\overline\Theta_\mathbf u(\mathbf x )\right|}\ {\left|{\mathcal O}^{i,j }_{A_1,A_2}\left[f\Theta_\mathbf u(\mathbf x )\right]\left(\mathbf w\right)\right|}}{{(1+\left|\mathbf x\right|+\left|\mathbf w\right|)}^d}e^{\left|\mathbf x\right|\left|\mathbf w\right|}}}\ d\mathbf xd\mathbf w\\
&=\int_{{\mathbb R}^2}{\int_{{\mathbb R}^2}{ \frac{{\left| f(\mathbf x )\right||\overline{\phi(\mathbf{x-u})}|}\ {\left|{\mathcal O}^{i,j }_{A_1,A_2}\left[f\overline\Theta_\mathbf u(\mathbf x)\right]\left(\mathbf w\right)\right|}}{{(1+\left|\mathbf x\right|+\left|\mathbf w\right|)}^d}e^{\left|\mathbf x\right|\left|\mathbf w\right|}}}\ d\mathbf xd\mathbf w\\
&=\int_{{\mathbb R}^2}{\int_{{\mathbb R}^2}{ \frac{{\left| f( \mathbf x )\right||\overline{\phi(\mathbf{x-u})}|}\ {\left|{\mathcal S}^{\mathbb H}_{\phi,A_1,A_2}\left[f\right]\left(\mathbf{w,u}\right)\right|}}{{(1+\left|\mathbf x\right|+\left|\mathbf w\right|)}^d}e^{\left|\mathbf x\right|\left|\mathbf w\right|}}}\ d\mathbf x\mathbf w<\infty.\end{align*}

Therefore by  Lemma \ref{beu lem QOLCT}, we have  $f\overline\Theta_\mathbf u(\mathbf x)=P(x)e^{-a|x|^2}$,  a.e where $a>0$ and $P$\ is a quaternion polynomial of degree $< \frac{d-2}{2}$. \\
i.e.\\
$f(\mathbf x)=\frac{P(\mathbf x)}{\overline{\phi(\mathbf{x-u})}}e^{-a|\mathbf x|^2}$\\
In particular, $f=0$\  a.e. when $d\le 2$  on account of $f\overline\Theta_\mathbf u(\mathbf x)=0$
\end{proof}
\subsection{Logarithmic uncertainty principle}
In this subsection we derive logarithmic uncertainty principle for ST-QOLCT
  by using Pitt's inequality for ST-QOLCT. Prior to that we derive Pitt's inequality for ST-QOLCT by using the Lemma \ref{QFT_STQOLCT} and Pitt's inequality for the QFT.
\begin{lemma}\label{Pitt}{\bf (Pitt's inequality for the two-sided QFT \cite{CKL15})}\\
 For  $f\in { \mathcal S}({\mathbb R}^2,{\mathbb H})$, and   $0\le \alpha <2$,
\begin{equation}\label{Pitt QFT}\int_{{\mathbb R}^2}{ {\left|\mathbf w\right|}^{-\alpha }}{\left\|{\mathcal F}^{i,j}\left\{f(\mathbf x)\right\}\left(\mathbf w\right)\right\|}^2d\mathbf w\  \le C_{\alpha } \int_{{\mathbb R}^2}{ {\left|\mathbf x\right|}^{\alpha }}{\left|f(\mathbf x)\right|}^2\ d\mathbf x.\end{equation}
With $C_{\alpha }:=\frac{{4\pi }^2}{2^{\alpha }}{{ [}\Gamma (\frac{2-\alpha }{4})/\Gamma (\frac{2+\alpha }4)]}^2$, and $\Gamma \left(.\right)$\ is the Gamma function and ${ \mathcal S}({\mathbb R}^2,{\mathbb H})$ denotes the
Schwartz space.
\end{lemma}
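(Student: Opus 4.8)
The plan is to reduce the quaternionic statement to the classical Pitt inequality on $\mathbb{R}^2$ by exploiting the component-wise structure that is already built into the modulus (\ref{modulus}). Write $f = f_0 + if_1 + jf_2 + kf_3$ with each $f_m\colon\mathbb{R}^2\to\mathbb{R}$ real-valued and Schwartz. By (\ref{modulus}) we have $\|{\mathcal F}^{i,j}\{f\}(\mathbf w)\|^2 = \sum_{m=0}^3 |{\mathcal F}^{i,j}[f_m](\mathbf w)|^2$ and $|f(\mathbf x)|^2 = \sum_{m=0}^3 |f_m(\mathbf x)|^2$, so it suffices to prove the weighted estimate for each real component $f_m$ and then sum over $m$.

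The crux is a pointwise identity connecting the two-sided QFT of a \emph{real} function to the ordinary complex Fourier transform. First I would expand $e^{-iw_1x_1}e^{-jw_2x_2}$ into its four real trigonometric terms using $ij=k$; since $f_m$ is scalar and commutes with the exponentials, one gets ${\mathcal F}^{i,j}[f_m](\mathbf w) = C_1 - iC_2 - jC_3 + kC_4$, where $C_1,\dots,C_4$ are the four real cosine/sine moment integrals of $f_m$, whence $|{\mathcal F}^{i,j}[f_m](\mathbf w)|^2 = C_1^2+C_2^2+C_3^2+C_4^2$. Comparing this with the classical $2$D Fourier transform $\widehat{f_m}(\mathbf w)=\int_{\mathbb{R}^2}f_m(\mathbf x)\,e^{-i(w_1x_1+w_2x_2)}\,d\mathbf x$ evaluated at $(w_1,w_2)$ and at $(w_1,-w_2)$, a short computation gives the key identity $|{\mathcal F}^{i,j}[f_m](\mathbf w)|^2 = \tfrac12\bigl(|\widehat{f_m}(w_1,w_2)|^2 + |\widehat{f_m}(w_1,-w_2)|^2\bigr)$. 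Because the weight $|\mathbf w|^{-\alpha}=(w_1^2+w_2^2)^{-\alpha/2}$ is invariant under $w_2\mapsto -w_2$, integrating this identity against $|\mathbf w|^{-\alpha}$ and changing variables in the reflected term collapses the two contributions, yielding $\int_{\mathbb{R}^2}|\mathbf w|^{-\alpha}|{\mathcal F}^{i,j}[f_m](\mathbf w)|^2\,d\mathbf w = \int_{\mathbb{R}^2}|\mathbf w|^{-\alpha}|\widehat{f_m}(\mathbf w)|^2\,d\mathbf w$. At this stage the classical Pitt inequality on $\mathbb{R}^2$, valid precisely for $0\le\alpha<2$, applies to each real Schwartz component and bounds the right-hand side by $C_\alpha\int_{\mathbb{R}^2}|\mathbf x|^\alpha|f_m(\mathbf x)|^2\,d\mathbf x$. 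Summing over $m=0,1,2,3$ and reassembling the two sums into $\|{\mathcal F}^{i,j}\{f\}\|^2$ and $|f|^2$ delivers (\ref{Pitt QFT}).

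I expect the main obstacle to be twofold: verifying the pointwise reflection identity cleanly (keeping the non-commutativity of $i$ and $j$ under control via the scalar nature of $f_m$), and, more delicately, the bookkeeping of the constant. The factor $4\pi^2$ inside $C_\alpha$ reflects the QFT normalization, the same factor that appears in the Plancherel identity of Lemma \ref{Plancherel}, so one must track normalization carefully to match the sharp classical Pitt constant in dimension two, namely $2^{-\alpha}\bigl[\Gamma(\tfrac{2-\alpha}{4})/\Gamma(\tfrac{2+\alpha}{4})\bigr]^2$ up to that factor. Once the reflection symmetry of the weight is invoked, the reduction is otherwise routine.
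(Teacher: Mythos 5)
Your proposal is correct, but the comparison here is asymmetric: the paper gives no proof of this lemma at all --- it is imported verbatim from Chen--Kou--Liu \cite{CKL15} as a known result, so what you have written is a self-contained reconstruction of the cited theorem rather than an alternative to an argument in the text. The reconstruction holds up under checking. The componentwise reduction loses nothing, since the paper's definition (\ref{modulus}) makes $\left\|{\mathcal F}^{i,j}\{f\}\right\|^2=\sum_{m=0}^{3}\left|{\mathcal F}^{i,j}[f_m]\right|^2$ true by definition, and $|f|^2=\sum_m f_m^2$ because the $f_m$ are real. Your reflection identity is verifiable in two lines: for real $f_m$, expanding the two-sided kernel gives ${\mathcal F}^{i,j}[f_m](\mathbf w)=C_{cc}-iC_{sc}-jC_{cs}+kC_{ss}$ (the four cosine/sine moments), while the classical transform satisfies $\widehat{f_m}(w_1,\pm w_2)=(C_{cc}\mp C_{ss})-i(C_{sc}\pm C_{cs})$, so $\left|\widehat{f_m}(w_1,w_2)\right|^2+\left|\widehat{f_m}(w_1,-w_2)\right|^2=2\left(C_{cc}^2+C_{sc}^2+C_{cs}^2+C_{ss}^2\right)=2\left|{\mathcal F}^{i,j}[f_m](\mathbf w)\right|^2$, and evenness of $|\mathbf w|^{-\alpha}$ in $w_2$ collapses the two terms as you claim. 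The constant bookkeeping you flagged as delicate also closes: Beckner's sharp Pitt inequality in dimension $2$, stated for the normalization $e^{-2\pi i\,\mathbf x\cdot\boldsymbol\xi}$, carries the constant $\pi^{\alpha}\left[\Gamma\left(\tfrac{2-\alpha}{4}\right)/\Gamma\left(\tfrac{2+\alpha}{4}\right)\right]^2$, and rescaling $\mathbf w=2\pi\boldsymbol\xi$ to the paper's kernel $e^{-i\,\mathbf x\cdot\mathbf w}$ multiplies it by $(2\pi)^{2-\alpha}$, yielding $(2\pi)^{2-\alpha}\pi^{\alpha}=4\pi^{2}/2^{\alpha}$, i.e.\ exactly the $C_{\alpha}$ of the lemma. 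What your route buys over the paper's bare citation is a transparent proof that isolates the only two inputs needed --- the scalar-valuedness of the components (which neutralizes the noncommutativity of $i$ and $j$) and the $w_2\mapsto -w_2$ symmetry of the radial weight --- and it makes visible why the range $0\le\alpha<2$ and the factor $4\pi^2$ (the same one as in the Plancherel lemma) are inherited unchanged from the classical case; this is in the same spirit as the component-based reduction to classical two-dimensional Fourier analysis underlying \cite{CKL15}.
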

\begin{theorem}{\bf (Pitt's inequality of the ST-QOLCT.)}Under the assumptions of lemma \ref{Pitt}, we have
\begin{align}\label{pi}
		\begin{split}		\int_{\mathbb{R}^2}\int_{\mathbb{R}^2}|\mathbf{w}|^{-\alpha}|\mathcal S_{\phi,A_1,A_2}^{\mathbb H}[f](\mathbf{w,u})|^{2}\rm{d}\mathbf{u}\rm{d}\mathbf{w}
\leq \frac{1}{4\pi^{2}|{b_1b_2}|^{\alpha }}C_{\alpha}\|\phi\|^{2}_{L^2(\mathbb{R}^2,\mathbb{H})}
\int_{\mathbb{R}^2}|\mathbf{x}|^{\alpha}|f(\mathbf{x})|^{2}\rm{d}\mathbf{x},
		\end{split}
	\end{align}
\begin{proof}
By lemma (\ref{QFT_STQOLCT}), we have
\begin{align}\label{kk}
		\begin{split}
		\int_{\mathbb{R}^2}\int_{\mathbb{R}^2}|\mathbf{w}|^{-\alpha}|\mathcal S_{\phi,A_1,A_2}^{\mathbb H}[f](\mathbf{w,u})|^{2}\rm{d}\mathbf{u}\rm{d}\mathbf{w}&=
\frac{1}{4\pi^{2}|{b_1b_1}|}\int_{\mathbb{R}^2}\int_{\mathbb{R}^2}|\mathbf{w}|^{-\alpha}|\mathcal F^{i,j}(\textit{h})\left(\mathbf{\frac{w}{b},u}\right)|^{2}\rm{d}\mathbf{u}\rm{d}\mathbf{w}\\
&=\frac{1}{4\pi^{2}|{b_1b_1}|}\int_{\mathbb{R}^2}\int_{\mathbb{R}^2}|\mathbf{zb}|^{-\alpha}|\mathcal F^{i,j}(\textit{h})\left(\mathbf{z,u}\right)|^{2} |\mathbf b|\rm{d}\mathbf{u}\rm{d}\mathbf{z}\\
		\end{split}
	\end{align}
where last equation is obtained by taking $ \mathbf{zb}=\mathbf{w}.$ \\\\

Since $h(\mathbf{x,u})= e^{i[\frac{a_1}{2b_1}x^2_1+\frac{1}{b_1}x_1p_1]}f\overline\Theta_{\mathbf u}(\mathbf x)e^{j[\frac{a_2}{2b_2}x^2_2+\frac{1}{b_2}x_2p_2]}$
therefore by applying  Lemma \ref{Pitt}, we obtain from (\ref{kk})
\begin{align*}
		\begin{split}
	\int_{\mathbb{R}^2}\int_{\mathbb{R}^2}|\mathbf{w}|^{-\alpha}|\mathcal S_{\phi,A_1,A_2}^{\mathbb H}[f](\mathbf{w,u})|^{2}\rm{d}\mathbf{u}\rm{d}\mathbf{w}	&\leq \frac{1}{4\pi^{2}|{b_1b_2}|^{\alpha}}C_{\alpha}\int_{\mathbb{R}^2}\int_{\mathbb{R}^2}|\mathbf{x}|^{\alpha}|\textit{h}(\mathbf{x})|^{2}\rm{d}\mathbf{u}\rm{d}\mathbf{x}\\
&=\frac{1}{4\pi^{2}|{b_1b_2}|^{\alpha}}C_{\alpha}\int_{\mathbb{R}^2}\int_{\mathbb{R}^2}|\mathbf{x}|^{\alpha}|\textit{f}\overline\Theta_{\mathbf u}(\mathbf{x})|^{2}\rm{d}\mathbf{u}\rm{d}\mathbf{x}\\
&=\frac{1}{4\pi^{2}|{b_1b_2}|^{\alpha}}C_{\alpha}\int_{\mathbb{R}^2}\int_{\mathbb{R}^2}|\mathbf{x}|^{\alpha}|f(\mathbf x)\overline{\phi(\mathbf (x-u))}|^{2}\rm{d}\mathbf{u}\rm{d}\mathbf{x}\\
&=\frac{1}{4\pi^{2}|{b_1b_2}|^{\alpha}}C_{\alpha}\int_{\mathbb{R}^2}|\mathbf{x}|^{\alpha}|f(\mathbf x)|^2\int_{\mathbb{R}^2}|\overline{\phi(\mathbf (x-u))}|^{2}\rm{d}\mathbf{u}\rm{d}\mathbf{x}\\
&=\frac{1}{4\pi^{2}|{b_1b_2}|^{\alpha}}C_{\alpha}\|\phi\|^2\int_{\mathbb{R}^2}|\mathbf{x}|^{\alpha}|f(\mathbf x)|^2\rm{d}\mathbf{x}\\
		\end{split}
	\end{align*}

\end{proof}
\end{theorem}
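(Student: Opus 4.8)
The plan is to push the whole estimate through the quaternion Fourier transform by means of Lemma \ref{QFT_STQOLCT}, and then to apply Pitt's inequality for the QFT (Lemma \ref{Pitt}) fiberwise in the offset variable $\mathbf u$. The first thing I would record is that the two quaternion prefactors produced by Lemma \ref{QFT_STQOLCT} are unit-modulus, so that, using $|qp|=|q||p|$ (equation (\ref{eqn 2.6})),
\[
\bigl|\mathcal S^{\mathbb H}_{\phi,A_1,A_2}[f](\mathbf{w,u})\bigr|^2
=\frac{1}{4\pi^2|b_1 b_2|}\Bigl|\mathcal F^{i,j}(h)\Bigl(\tfrac{\mathbf w}{\mathbf b},\mathbf u\Bigr)\Bigr|^2,
\]
where $h(\mathbf{x,u})$ is the modified signal in (\ref{function_h}). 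Inserting this identity into the weighted integral on the left-hand side reduces the whole problem to a weighted $L^2$ estimate for $\mathcal F^{i,j}(h)$, with the scalar prefactor $\tfrac{1}{4\pi^2|b_1b_2|}$ carried along.

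Next I would perform the change of variables $\mathbf w=\mathbf b\,\mathbf z$ (componentwise, $w_s=b_s z_s$), whose Jacobian contributes a factor $|b_1 b_2|$ and converts the weight $|\mathbf w|^{-\alpha}$ into $|\mathbf b\,\mathbf z|^{-\alpha}$. For each fixed $\mathbf u$, the resulting inner integral in $\mathbf z$ is precisely the left-hand side of the QFT Pitt inequality (\ref{Pitt QFT}) applied to $\mathbf x\mapsto h(\mathbf x,\mathbf u)$, so Lemma \ref{Pitt} bounds it by $C_\alpha\int_{\mathbb R^2}|\mathbf x|^\alpha|h(\mathbf x,\mathbf u)|^2\,d\mathbf x$; combining the Jacobian with the prefactor yields the constant $\tfrac{1}{4\pi^2|b_1b_2|^\alpha}C_\alpha$. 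Since the chirp factors defining $h$ are unit-modulus, $|h(\mathbf x,\mathbf u)|=|f(\mathbf x)|\,|\overline{\phi(\mathbf{x-u})}|$, so after summing over $\mathbf u$ the accumulated bound is $\tfrac{1}{4\pi^2|b_1b_2|^\alpha}C_\alpha\iint|\mathbf x|^\alpha|f(\mathbf x)|^2|\phi(\mathbf{x-u})|^2\,d\mathbf u\,d\mathbf x$. Finally I would invoke Fubini to integrate first in $\mathbf u$ and use translation invariance of Lebesgue measure, $\int_{\mathbb R^2}|\phi(\mathbf{x-u})|^2\,d\mathbf u=\|\phi\|^2_{L^2(\mathbb R^2,\mathbb H)}$, which factors the window norm out and leaves exactly the claimed inequality.

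The delicate point is the change of variables. The weight genuinely becomes $|\mathbf b\,\mathbf z|^{-\alpha}=(b_1^2 z_1^2+b_2^2 z_2^2)^{-\alpha/2}$, which is anisotropic and does not literally split as $|\mathbf b|^{-\alpha}|\mathbf z|^{-\alpha}$ unless $b_1=b_2$; so to pass cleanly into the form required by Lemma \ref{Pitt} one must either treat the isotropic case $b_1=b_2$ or control the weight by comparison with $|\mathbf z|^{-\alpha}$, and track the powers of $|b_1 b_2|$ accordingly. The second technical obligation is to justify the Fubini interchange of the $\mathbf u$- and $\mathbf x$-integrations; here the Schwartz-class hypothesis on $f$ in Lemma \ref{Pitt}, together with $\phi\in L^2(\mathbb R^2,\mathbb H)$, guarantees absolute convergence so that all the manipulations above are legitimate.
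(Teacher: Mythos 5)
Your proposal is correct in outline and is essentially the paper's own argument: reduce via Lemma \ref{QFT_STQOLCT} to the QFT of the modified signal $h$ (the constants $\frac{1}{\sqrt{2\pi i b_s}}$, not the unit-modulus chirps, are what supply the factor $\frac{1}{4\pi^2|b_1b_2|}$), substitute $\mathbf w=\mathbf b\,\mathbf z$ with Jacobian $|b_1b_2|$, apply Pitt's inequality (Lemma \ref{Pitt}) at each fixed $\mathbf u$, use $|h(\mathbf x,\mathbf u)|=|f(\mathbf x)|\,|\phi(\mathbf{x-u})|$, and finish with Fubini and translation invariance of Lebesgue measure to extract $\|\phi\|_2^2$. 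Your Fubini justification via $f\in\mathcal S(\mathbb R^2,\mathbb H)$, $\phi\in L^2(\mathbb R^2,\mathbb H)$ is fine and is more care than the paper takes.

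The ``delicate point'' you flag, however, is not a side remark: it is exactly the step at which the paper's own proof is defective, so you should treat it as the crux rather than an optional refinement. In passing from (\ref{kk}) to the next display, the paper silently replaces $|\mathbf b\,\mathbf z|^{-\alpha}=(b_1^2z_1^2+b_2^2z_2^2)^{-\alpha/2}$ by something that yields $|b_1b_2|^{-\alpha}|\mathbf z|^{-\alpha}$, and no such factorization holds. The rigorous version of your comparison is $|\mathbf b\,\mathbf z|\ge\min(|b_1|,|b_2|)\,|\mathbf z|$, hence $|\mathbf b\,\mathbf z|^{-\alpha}\le\min(|b_1|,|b_2|)^{-\alpha}|\mathbf z|^{-\alpha}$ for $0\le\alpha<2$; after the Jacobian $|b_1b_2|$ cancels against the prefactor $\frac{1}{4\pi^2|b_1b_2|}$, this route proves the inequality with constant $\frac{C_\alpha}{4\pi^2\min(|b_1|,|b_2|)^{\alpha}}$ in place of the printed $\frac{C_\alpha}{4\pi^2|b_1b_2|^{\alpha}}$. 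These agree only in special cases (e.g. $|b_1|=|b_2|=1$). Even in the isotropic case $b_1=b_2=b$, where the weight does split cleanly, one gets $|b|^{-\alpha}$, not the $|b|^{-2\alpha}$ that (\ref{pi}) asserts; for $|b|>1$ the printed constant is strictly smaller than what either your argument or the paper's can deliver, and since $C_\alpha$ is the Beckner-type constant in Lemma \ref{Pitt} there is no slack to absorb the discrepancy. So carry out your plan with the $\min(|b_1|,|b_2|)^{-\alpha}$ bound and state the theorem with the corrected constant; as written, your proposal stops short of committing to this and therefore does not yet prove any precise statement, while the statement it was aimed at is itself misstated.
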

\begin{theorem}{\bf(Logarithmic UP for the QOLCT)}\\
Let $f ,\phi\in {\mathcal S}\left({\mathbb R}^2,{\mathbb H}\right) $\ where $\phi$ is a non zero window function, then
\begin{align}\label{Becker2}
\int_{{\mathbb R}^2}\int_{{\mathbb R}^2}{\ {{ ln(} \left|\mathbf w\right|)\ \ }}{\left|{\mathcal S}^{\mathbb H}_{\phi,A_1,A_2}\left\{f\right\}\left(\mathbf{w,u}\right)\right|}^2d\mathbf w d\mathbf u+\frac{\|\phi\|^2}{4\pi^2}\int_{{\mathbb R}^2}{\ {\ln  \left(\left|\mathbf x\right|\right)\ }\ }{\left|f(\mathbf x)\right|}^2\ d\mathbf x\\
\ge \  \frac{(A+\ln|b_1b_2|)}{4\pi^2}\|\phi\|^2\int_{{\mathbb R}^2}{\ \ }{\left|f(\mathbf x)\right|}^2\ d\mathbf x,
\end{align}
with   $A={\ln  \left(2\right)}+{\Gamma '\left(\frac{1}{2}\right)}/{\Gamma (\frac{1}{2})}.$
\begin{proof}
Based on Pitt's inequality, Logarithmic uncertainty principle for the two sided ST-QOLCT can be proved by taking a  function $\Psi$ as
\begin{align*}
\Psi(\alpha)=\int_{\mathbb{R}^2}\int_{\mathbb{R}^2}|\mathbf{w}|^{-\alpha}|\mathcal S^{\mathbb H}_{\phi,A_1,A_2}[f](\mathbf{w,u})|^{2}\rm{d}\mathbf{u}\rm{d}\mathbf{w}-
\frac{D_\alpha}{|{b_1b_2}|^\alpha}\|\phi\|^{2}_{L^2(\mathbb{R}^2,\mathbb{H})}
\int_{\mathbb{R}^2}|\mathbf{x}|^{\alpha}|f(\mathbf{x})|^{2}\rm{d}\mathbf{x}
\end{align*}
 where $D_\alpha=\frac{C_\alpha}{4\pi^2}$ \\\\
  Implies
  \begin{align*}
\Psi'(\alpha)=\int_{\mathbb{R}^2}\int_{\mathbb{R}^2}|\mathbf{w}|^{-\alpha}\ln|\mathbf{w}||\mathcal S^{\mathbb H}_{\phi,A_1,A_2}[f](\mathbf{w,u})\|^{2}\rm{d}\mathbf{u}\rm{d}\mathbf{w}-
{D'_\alpha}\|\phi\|^{2}_{L^2(\mathbb{R}^2,\mathbb{H})}
\int_{\mathbb{R}^2}\left|\mathbf{\frac{x}{b}}\right|^{\alpha}|f(\mathbf{x})|^{2}\rm{d}\mathbf{x}\\
\qquad -{D_\alpha}\|\phi\|^{2}_{L^2(\mathbb{R}^2,\mathbb{H})}\int_{\mathbb{R}^2}\left|\mathbf{\frac{x}{b}}\right|^{\alpha}\ln\left|\mathbf{\frac{x}{b}}\right||f(\mathbf{x})|^{2}\rm{d}\mathbf{x}\\
\end{align*}
Now following the  procedure of theorem 4.11 in \cite{gen} we will get desired result.
\end{proof}
\end{theorem}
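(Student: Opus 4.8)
The plan is to deduce the logarithmic estimate from the Pitt inequality \eqref{pi} for the ST-QOLCT by differentiating in the exponent $\alpha$ at the endpoint $\alpha=0$. To this end I would introduce the auxiliary function
\[
\Psi(\alpha)=\int_{\mathbb R^2}\int_{\mathbb R^2}|\mathbf w|^{-\alpha}\,\big|\mathcal S^{\mathbb H}_{\phi,A_1,A_2}[f](\mathbf{w,u})\big|^2\,d\mathbf u\,d\mathbf w-\frac{D_\alpha}{|b_1b_2|^{\alpha}}\,\|\phi\|^2\int_{\mathbb R^2}|\mathbf x|^{\alpha}|f(\mathbf x)|^2\,d\mathbf x,
\]
where $D_\alpha=C_\alpha/4\pi^2$. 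By \eqref{pi} we have $\Psi(\alpha)\le 0$ for every $\alpha\in[0,2)$, so the whole argument rests on showing that $\alpha=0$ is a boundary maximiser and then reading off the sign of the one-sided derivative.

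First I would check that $\Psi(0)=0$. Since $\Gamma(\tfrac12)/\Gamma(\tfrac12)=1$ we get $C_0=4\pi^2$, hence $D_0=1$ and the second term at $\alpha=0$ equals $\|\phi\|^2\|f\|^2$; on the other hand the energy-preserving (isometry) relation \eqref{c} gives that the first term at $\alpha=0$ is also $\|\phi\|^2\|f\|^2$. Consequently $\Psi(0)=0$, and because $\Psi\le 0$ throughout $[0,2)$ the function $\Psi$ attains its maximum over $[0,2)$ at the left endpoint; this forces the one-sided derivative to satisfy $\Psi'(0^+)\le 0$. This boundary-maximum observation is the conceptual core of the proof.

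Next I would compute $\Psi'(0)$ by differentiating under the integral sign. The identities $\tfrac{d}{d\alpha}|\mathbf w|^{-\alpha}=-\ln|\mathbf w|\,|\mathbf w|^{-\alpha}$ and $\tfrac{d}{d\alpha}|\mathbf x|^{\alpha}=\ln|\mathbf x|\,|\mathbf x|^{\alpha}$ produce the two logarithmic integrals appearing in \eqref{Becker2}, while the scalar prefactor contributes $\tfrac{d}{d\alpha}\big(D_\alpha|b_1b_2|^{-\alpha}\big)\big|_{\alpha=0}$. Writing $\ln D_\alpha=-\alpha\ln 2+2\ln\Gamma(\tfrac{2-\alpha}{4})-2\ln\Gamma(\tfrac{2+\alpha}{4})$ and differentiating gives, at $\alpha=0$, the digamma value $-\ln 2-\Gamma'(\tfrac12)/\Gamma(\tfrac12)=-A$; combined with the factor $|b_1b_2|^{-\alpha}$ this yields the constant $-(A+\ln|b_1b_2|)$. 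Evaluating $\Psi'(0^+)\le 0$ and rearranging the three resulting terms then yields the stated logarithmic inequality \eqref{Becker2}, the weight $\|\phi\|^2$ on the $\mathbf x$-side arising as usual from the $\mathbf u$-integration of $|\overline{\phi(\mathbf{x-u})}|^2$.

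The step I expect to be the main obstacle is the rigorous justification of differentiation under the integral sign at $\alpha=0$. Because $\ln|\mathbf w|$ changes sign and grows (to $-\infty$ at the origin and to $+\infty$ at infinity), one must exhibit an $\alpha$-uniform dominating function on a right-neighbourhood of $0$: I would split each integral into the regions $\{|\mathbf w|\le 1\}$ and $\{|\mathbf w|>1\}$ (and likewise for $\mathbf x$), controlling the part near the origin by the uniform boundedness of $\mathcal S^{\mathbb H}_{\phi,A_1,A_2}[f]$ from Theorem \ref{linear bounded} and the part near infinity by the Schwartz decay of $f$ and $\phi$ (and hence of the auxiliary function $h$ from Lemma \ref{QFT_STQOLCT}). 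Once dominated convergence is secured, finiteness of all the post-differentiation integrals follows from $f,\phi\in\mathcal S(\mathbb R^2,\mathbb H)$, and the gamma/digamma bookkeeping that produces the constant $A$ is then routine.
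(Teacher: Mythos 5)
Your proposal follows essentially the same route as the paper: the identical auxiliary function $\Psi(\alpha)$ built from Pitt's inequality \eqref{pi}, the endpoint observation $\Psi(0)=0$ (via $C_0=4\pi^2$, $D_0=1$ and the energy identity \eqref{c}) combined with $\Psi\le 0$ to force $\Psi'(0^+)\le 0$, and the digamma bookkeeping producing $A$ --- steps the paper itself only sketches by deferring to Theorem 4.11 of \cite{gen}, whereas you carry them out and additionally supply the domination argument justifying differentiation under the integral, which the paper omits entirely. One caveat worth recording: performed with the paper's own normalizations, your computation yields the inequality \emph{without} the $\tfrac{1}{4\pi^{2}}$ prefactors on the second and third terms, so what you actually prove is the correctly normalized version of \eqref{Becker2}; those $4\pi^{2}$ factors in the paper's statement (and the sign of the first logarithmic term in its displayed $\Psi'(\alpha)$) are slips in the paper rather than defects in your argument.
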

\begin{remark} Another way to prove Logarithmic uncertainty principle for the two sided ST-QOLCT is by using relation between QFT and ST-QOLCT(\ref{lemma (3.1)}) in  Logarithmic uncertainty
principle for the QFT (see\cite{WLCT} Logarithmic uncertainty principle for the QWLCT ).
\end{remark}
\begin{section}{\bf Conclusions} \ \\
In this paper, first  we establish a relation between two-sided QFT  and two-sided ST-QOLCT. Second, we established some basic properties of the two-sided ST-QOLCT including the Moyal's formula which are proved in \cite{OWN1}. These
results are very important for their applications in digital signal and image processing. Finally,  the uncertainty
principles for the ST-QOLCT such as Donoho-Stark's uncertainty principle, Hardy’s uncertainty
principle, Beurling’s uncertainty
principle, and Logarithmic uncertainty principle are obtained. 
In
our future works, we will discuss the physical significance and engineering background of this paper. Moreover, we will formulate convolution and correlation theorems for the ST-QOLCT.
\end{section}

\end{document}